\theoremstyle{plain}
\newtheorem{theorem}{Theorem}[section]
\theoremstyle{definition}
\newtheorem{definition}[theorem]{Definition}
\newtheorem{lemma}[theorem]{Lemma}
\newtheorem{prop}[theorem]{Proposition}
\DeclarePairedDelimiter{\ceil}{\lceil}{\rceil}
\begin{document}

\title{Iterative Hard Thresholding for Weighted Sparse Approximation}
\author{Jason Jo \\ 
jjo@math.utexas.edu \\
Mathematics Department at the University of Texas at Austin
\\ 2515 Speedway,
Austin, Texas 78712}
\date{\today}
\maketitle
\begin{abstract}
Recent work by Rauhut and Ward developed a notion of weighted sparsity and
a corresponding notion of Restricted Isometry Property for the space of weighted
sparse signals. Using these notions, we pose a best
weighted sparse approximation problem, i.e. we seek structured sparse solutions
to underdetermined systems of linear equations. Many computationally
efficient greedy algorithms have been developed to solve the problem of best
$s$-sparse approximation. The design of all of these algorithms employ a
similar template of exploiting the RIP and computing projections onto the space
of sparse vectors. We present an extension of the Iterative Hard Thresholding
(IHT) algorithm to solve the weighted sparse approximation problem. This IHT
extension employs a weighted analogue of the template employed by all greedy
sparse approximation algorithms. Theoretical
guarantees are presented and much of the original analysis remains unchanged
and extends quite naturally. However, not all the theoretical analysis extends.
To this end, we identify and discuss the barrier to extension. Much like IHT,
our IHT extension requires
computing a projection onto a non-convex space. However unlike IHT and other
greedy
methods which deal with the classical notion of sparsity, no simple method is
known for computing projections onto these weighted sparse spaces. Therefore we
employ a surrogate for the projection and present its empirical performance on power law distributed signals. 
\end{abstract}


\allowdisplaybreaks

\section{Introduction}
Compressed sensing algorithms attempt to solve underdetermined linear systems of
equations by seeking structured solutions, namely that the underlying signal is
either sparse or well approximated by a sparse signal \cite{CSTextbook}.
However, in practice
much more knowledge about a signal's support set is known beyond that of sparsity
or
compressibility. Empirically it has been shown that the spectral power of
natural images decays with frequency $f$ according to a power-law $1/f^p$ for $p
\approx 2$ \cite{powerlaw1, powerlaw2}. Likewise, the frequency of earthquakes
corresponding to their magnitudes as measured by Moment magnitude scale
empirically also exhibits a power law decay \cite{natesilver}. For these types
of highly structured signals, certain atoms in
the dictionary are more prevalent in the support set of a signal than other atoms. The traditional notion of
sparsity treats all atoms uniformly and thus is not ideally suited to utilize
this rich prior knowledge. 

To this end one can consider using weighted $\ell_1$ minimization to obtain
structured sparse solutions. Weighted
$\ell_1$ minimization can leverage prior knowledge of a signal's support to
undersample the signal, and avoid overfitting the data \cite{RW, pdf1, pdf2,
pdf3}. However, one drawback
that weighted $\ell_1$ minimization shares
with $\ell_1$ minimization is that traditional solution methods scale poorly
\cite{Ambuj}. 

While many computationally efficient approximation algorithms
have been developed for computing a best $s$-sparse approximation
\cite{CSTextbook} no such method has been developed for the weighted case. In
this article, we make the following
contributions:
\begin{enumerate}
 \item Using a generalized notion of weighted sparsity and a corresponding
notion of Restricted Isometry Property on weighted sparse signals developed in
\cite{RW}, we pose a weighted analogue of the best $s$-sparse approximation
problem.
\item An extension of the Iterative Hard Thresholding (IHT) algorithm \cite{IHT}
is presented to solve the weighted
sparse approximation problem. We emphasize how the same template
used to derive performance guarantees for all the greedy compressed sensing
algorithms carries over naturally. Indeed, performance guarantees are derived
and much of the theoretical analysis remains unchanged. However, not all
theoretical results extend and the barrier seems to be the nature of weighted
thresholding. We explore this extension barrier and present a detailed analysis of
which theoretical guarantees do not extend and how the barrier is responsible
for this obstruction. Under an additional hypothesis, the extension barrier is
rendered moot and we present some specialized theoretical guarantees. The
nature and proof of these guarantees are all directly motivated by \cite{RW}.
\item While both IHT and the IHT extension compute a projection onto a
non-convex
space, the projection that IHT requires can actually be efficiently computed
while the projection that our IHT extension requires does not seem to have an
efficient solution. To this end, we consider a tractable surrogate to
approximate this non-convex projection and we present its empirical
performance on power law distributed signals.
\end{enumerate}

The remainder of the article is organized as follows. In Section 2 we quote
much of the weighted sparsity concepts developed from \cite{RW} that will be
needed for the IHWT extension. In Section 3 the IHWT algorithm is presented and
theoretical performance guarantees are established. In Section 4 we present
various numerical results. 

\section{Weighted Sparsity}
In this section, all of the concepts and definitions are taken from \cite{RW}.
For unstructured sparse recovery problems, the sparsity of a signal $\bm{x} \in
\mathbb{C}^N$ is defined to be the cardinality of its support set, denoted as
$\|\bm{x}\|_0$. More generally, we have a dictionary of atoms $\{a_i\}_{i =
1}^N$
and for the unstructured case, each atom is given the weight $\omega_i = 1$ for
all $i = 1, \dots, N$. In this context, the sparsity of a signal can be viewed
as the sum of
the weights of the atoms in the support set. Following \cite{RW}, given a
dictionary
$\{a_i\}_{i = 1}^N$ and a corresponding set of weights $\{\omega_i\}_{i =
1}^N, \omega_i \geq 1$ for $i = 1,\dots, N$, we can define the
\textit{weighted} $\ell_0$ \textit{norm}:
$$
\|\bm{x}\|_{\omega, 0} = \sum_{j: x_j \neq 0} \omega_j^2.
$$
Observe that the weighted sparsity of a vector $\bm{x}$ is at least as large as the
unweighted sparsity of $\bm{x}$, i.e. $\|\bm{x}\|_{\omega, 0} \geq
\|\bm{x}\|_0$.  

For any subset $S \subset \mathbb{N}$, we may define the weighted cardinality
of $S$ via:
$$
\omega(S) := \sum_{j \in S} \omega_j^2.
$$
In general, we also have the weighted $\ell_p$ spaces with norm:
$$
\|\bm{x}\|_{\omega, p} = \sum_{j: x_j \neq 0} |x_j|^p \omega_j^{2-p}.
$$

Using this generalized notion of sparsity allows us to pose the
\textit{best $(\omega,s)$-sparse approximation problem:}
\begin{equation}\label{weightedsparse}
 \textrm{minimize } \frac{1}{2}\|\bm{A} \bm{x} -\bm{y}\|_2^2 \textrm{    subject
to    } 
\|\bm{x}\|_{\omega,
0} \leq s.
\end{equation}

Given this generalized notion of sparsity, \cite{RW} defines a generalized
notion
of a map $\bm{A}: \mathbb{C}^N \rightarrow \mathbb{C}^m$ being an isometry on
the space of weighted sparse vectors:
\begin{definition} (Weighted restricted isometry constants) For $\bm{A} \in
\mathbb{C}^{m \times N}$, weight parameter $\omega$ and $s \geq 1$, the weighted
restricted isometry constant $\delta_{\omega, s}$ associated to $\bm{A}$ is the
smallest number $\delta$ for which
$$
 (1-\delta) \|\bm{x}\|_2^2 \leq \|\bm{Ax}\|_2^2 \leq (1+\delta)\|\bm{x}\|_2^2
$$
holds for all $\bm{x} \in \mathbb{C}^N$ with $\|\bm{x}\|_{\omega, 0} \leq s$. We
say that a map $\bm{A}$ has the weighted restricted isometry property with
respect to the weights $\omega$ ($\omega$-RIP) if $\delta_{\omega, s}$ is small
for $s$ reasonably large compared to $m$. 
\end{definition}

Observe that for any positive number $s$, there exists a partition of $s$ with
distinct parts of maximal cardinality, i.e. an index set $\mathcal{I}$ with
$\omega$-weighted cardinality $s$ with the largest number of non-zero atoms. Let
$P_{\omega}(s)$ denote this maximal term
$$
P_{\omega}(s) := \max_{\omega(\mathcal{I}) \leq s} |\mathcal{I}|. 
$$
Clearly if $\bm{A}$ satisfies RIP of order $P_{\omega}(s)$, then $\bm{A}$ will
also
satisfy the $\omega$-RIP of order $s$. However the converse does not hold. Not
only do weighted sparse signals have a constraint on the cardinality of their
support sets, they can also have a constraint on the maximal atom which can be
present in their support sets. Take for example the weights defined by
$\omega(j) = \sqrt{j}$, $j = 1,\dots, N$. An $(\omega,s)$-sparse signal cannot
have any atom with index higher than $\ceil{s^{1/2}}$ supported. If $\bm{A}$
were to
satisfy the $\omega$-RIP of order $s$, then the $\omega$-RIP alone does not
guarantee that $\bm{A}$ preserves the geometry of heavy-tailed signals, no
matter how
sparse they may be in the unweighted sense. We conclude that the $\omega$-RIP is
in general a weaker isometry condition than the RIP and the primary
reason being the existence of heavy tailed signals.  

In \cite{RW} a heuristic justification is given that for weights $\omega$
satisfying
$\omega(j) = j^{\alpha/2}$, with high probability an $m \times N$ i.i.d.
subgaussian random matrix satisfies the $\omega$-RIP once
$$
m = \mathcal{O} \left(   \alpha^{   1/\alpha - 1  } s^{1/(\alpha+1)} \log(s)  
\right).
$$
Note that fewer measurements are required than in the unweighted case, which has
the lower bound of $m = \mathcal{O}(s \log(N/s))$ measurements. 

The following properties of RIP matrices carry over immediately to $\omega$-RIP
matrices.
\begin{lemma}\label{classicRIPbounds}
Let $\bm{I}$ denote the $N \times N$ identity matrix. Given a set of weights
$\omega
\in \mathbb{C}^N$, vectors $\bm{u}, \bm{v} \in \mathbb{C}^N$, $\bm{y} \in
\mathbb{C}^m$ and an
index set $S \subseteq [N]$,
\begin{alignat*}{3}
| \langle  \bm{u}, (\bm{I} - \bm{A^*A})\bm{v} \rangle| &\leq \delta_{\omega,t}
\|\bm{u}\|_2 \|\bm{v}\|_2  &
\indent if \ &   \| \textup{supp}(\bm{u}) \cup      
\textup{supp}(\bm{v}) \|_{\omega,0}
\leq t,
\\ \| ((\bm{I} - \bm{A^*A})\bm{v})_S\|_2 &\leq \delta_{\omega, t}\|\bm{v}\|_2  &
   if \ &  \| S \cup       \textup{supp}(\bm{v})\|_{\omega,0} \leq t,
\\ \| (\bm{A^*y})_S\|_2 &\leq \sqrt{1+\delta_{\omega,s}}\|\bm{y}\|_2  &   if \  &
\|S\|_{\omega,0} \leq s. 
\end{alignat*} 
\end{lemma}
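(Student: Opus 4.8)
The plan is to reduce all three inequalities to a single operator-norm estimate for the restriction of $\bm{I} - \bm{A^*A}$ to weighted-sparse index sets, mirroring the unweighted argument verbatim. First I would record the one structural fact that makes the translation automatic: the weighted cardinality $\omega(S) = \sum_{j \in S}\omega_j^2$ is monotone under set inclusion, since every $\omega_j^2 \geq 1 > 0$. Consequently, if $T$ is any index set with $\|T\|_{\omega,0} = \omega(T) \leq t$ and $\bm{z}$ is supported on $T$, then $\|\bm{z}\|_{\omega,0} = \omega(\textup{supp}(\bm{z})) \leq \omega(T) \leq t$, so the defining two-sided bound of $\delta_{\omega,t}$ applies to $\bm{z}$.

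With this in hand, I would prove the Hermitian restriction bound $\|((\bm{I} - \bm{A^*A})\bm{z})_T\|_2 \leq \delta_{\omega,t}\|\bm{z}\|_2$ for every $\bm{z}$ supported on $T$ with $\omega(T) \leq t$. For such $\bm{z}$ the quadratic form satisfies $\langle (\bm{A^*A} - \bm{I})\bm{z}, \bm{z}\rangle = \|\bm{Az}\|_2^2 - \|\bm{z}\|_2^2$, which $\omega$-RIP bounds in absolute value by $\delta_{\omega,t}\|\bm{z}\|_2^2$. Since $\bm{I} - \bm{A^*A}$ is Hermitian, the operator norm of its compression to coordinates in $T$ equals the supremum of $|\langle (\bm{I}-\bm{A^*A})\bm{w}, \bm{w}\rangle|$ over unit vectors $\bm{w}$ supported on $T$, and is therefore at most $\delta_{\omega,t}$; writing $((\bm{I}-\bm{A^*A})\bm{z})_T$ as this compression applied to $\bm{z}$ gives the claim.

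Inequalities (1) and (2) then follow formally. For the first, I would set $T = \textup{supp}(\bm{u}) \cup \textup{supp}(\bm{v})$, note $\bm{u} = \bm{u}_T$ and $\bm{v} = \bm{v}_T$, and combine Cauchy-Schwarz with the restriction bound: $|\langle \bm{u}, (\bm{I} - \bm{A^*A})\bm{v}\rangle| \leq \|\bm{u}\|_2\,\|((\bm{I} - \bm{A^*A})\bm{v})_T\|_2 \leq \delta_{\omega,t}\|\bm{u}\|_2\|\bm{v}\|_2$. For the second, I would use the variational characterization $\|((\bm{I} - \bm{A^*A})\bm{v})_S\|_2 = \sup\{|\langle \bm{u}, (\bm{I} - \bm{A^*A})\bm{v}\rangle| : \|\bm{u}\|_2 \leq 1,\ \textup{supp}(\bm{u}) \subseteq S\}$; each competitor satisfies $\textup{supp}(\bm{u}) \cup \textup{supp}(\bm{v}) \subseteq S \cup \textup{supp}(\bm{v})$, so by monotonicity its weighted cardinality is $\leq t$ and inequality (1) applies, yielding $\delta_{\omega,t}\|\bm{v}\|_2$ after taking the supremum.

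For inequality (3) I would again dualize: $\|(\bm{A^*y})_S\|_2 = \sup\{|\langle \bm{z}, \bm{A^*y}\rangle| : \|\bm{z}\|_2 \leq 1,\ \textup{supp}(\bm{z}) \subseteq S\} = \sup |\langle \bm{Az}, \bm{y}\rangle|$. Each such $\bm{z}$ has $\|\bm{z}\|_{\omega,0} \leq \omega(S) \leq s$, so the $\omega$-RIP upper bound gives $\|\bm{Az}\|_2 \leq \sqrt{1+\delta_{\omega,s}}$, and Cauchy-Schwarz produces $\sqrt{1+\delta_{\omega,s}}\|\bm{y}\|_2$. I do not expect a genuine obstacle: the whole content is that $\omega$-RIP is phrased through $\|\cdot\|_{\omega,0}$ rather than $\|\cdot\|_0$, and the proofs never touch the weights except to form unions and subsets of supports. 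The only point needing care — and precisely the reason these bounds carry over \emph{immediately} — is verifying at each step that such a union or subset of supports still has weighted cardinality within the relevant RIP order, which is exactly the monotonicity of $\omega(\cdot)$ established at the outset.
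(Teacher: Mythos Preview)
Your proposal is correct and is precisely the approach the paper intends: the paper's own proof is just the one-line remark that the unweighted arguments carry over verbatim once the RIP is replaced by the $\omega$-RIP, and you have spelled out those unweighted arguments together with the one observation (monotonicity of $\omega(\cdot)$ under set inclusion) that makes the substitution legitimate.
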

\begin{proof}
The proofs follow immediately from their unweighted counterparts where one
employs the $\omega$-RIP instead of the RIP. See
\cite{CSTextbook} for full proofs. 
\end{proof}

\section{Iterative Hard Weighted Thresholding: In Theory}
\subsection{Intuition and Background} First we
revisit the unweighted case to build some intuition and use it to motivate
modifications for the weighted case. In compressed sensing, greedy algorithms
solve problems of the form:
$$\min f(\bm{x}) \textrm{ subject to } \bm{x} \in \mathcal{S}$$ 
where $\mathcal{S}$ is some structured space and $f(\bm{x})$ denotes a
loss
function which depends on $\bm{x}$ and the vector of linear samples $\bm{y} =
\bm{Ax}^* + \bm{e}$,
where $\bm{e}$ represents measurement noise and $\bm{A} \in \mathbb{C}^{m \times
N}$ is a
sampling
matrix. For the case of best $s$-sparse approximation
$f(\bm{x}) = \frac{1}{2}\|\bm{Ax}-\bm{y}\|_2^2$ and $\mathcal{S} = \{\bm{x}
\in \mathbb{C}^N:
\|\bm{x}\|_0 \leq
s\}$.

Iterative greedy algorithms such as IHT exploit the RIP in the
following manner. If the matrix $\bm{A}$ satisfies the RIP of order $s$, then
$\bm{A}^*\bm{A}$
is a good enough approximation to the identity matrix on the space of $s$-sparse
vectors so that applying $\bm{A}^*$ to the vector of noisy samples
approximately yields the true signal $\bm{x}^*$ up to some ``noise term"
$$\bm{A}^*\bm{y} = \bm{A}^*\bm{Ax^*} + \bm{A}^*\bm{e} \approx \bm{x}^* +
\bm{\xi}.$$ 
Iterative greedy algorithms produce a dense signal approximation $\bm{z}^n
\approx
\bm{x}^* + \bm{\xi}^n$ at each stage $n$ and they denoise this dense signal to
output an
$s$-sparse approximation. Roughly speaking, the denoising process of all of
these iterative methods involves a projection onto the space of sparse vectors.
The idea behind this
is that while the noise term $\bm{\xi}^n$ is dense, its energy is assumed to be
spread
throughout all of its coordinates and as a result does not heavily contaminate the
$s$-most significant coordinates of $\bm{x}^*$. Despite the fact that
$\mathcal{S}$
is non-convex, it is very simple to compute projections onto this space: all
that is required is sorting the entries of the signal by their magnitude and
picking the top $s$ entries. This
fact combined with the above RIP intuition explains why greedy methods such as
IHT can efficiently solve a non-convex problem.

Setting our initial approximation $\bm{x}^{0} = \bm{0}$, IHT is the iteration:
\begin{equation}\label{IHT}
 \bm{x}^{n+1} = H_s(\bm{x}^{n} + \bm{A}^*(\bm{y} - \bm{Ax}^{n})),
\end{equation}
where $H_s$ is the \textit{hard thresholding operator} and at each step $n+1$
it outputs the best $s$-sparse approximation to $\bm{x}^{n} + \bm{A}^*(\bm{y} -
\bm{Ax}^{n})$ by
projecting it onto $\mathcal{S}$. More specifically
\begin{equation}
H_s(\bm{x}) = \inf_{\|\bm{z}\|_0 \leq s} \|\bm{x}-\bm{z}\|_2. 
\end{equation}
Note that by plugging in for $\bm{y} = \bm{Ax}^* + \bm{e}$,
we obtain the following approximation:
$$
\bm{x}^{n} + \bm{A}^*(\bm{y} - \bm{Ax}^{n}) = \bm{A}^*\bm{A} \bm{x}^* + (\bm{I}
- \bm{A}^*\bm{A})\bm{x}^n + \bm{A}^* \bm{e}
\approx \bm{x}^* + \bm{\xi}^n.
$$
Denoising by applying the hard thresholding operator $H_s$ yields
$\bm{x}^{n+1}$, an
$s$-sparse
approximation to the true underlying signal $\bm{x}$. 

\subsection{Extension to the Weighted Case}Observe that one can equivalently
view IHT as a projected gradient descent algorithm with constant
step size equal to 1. Once IHT is viewed in this manner, the modification we
make
to extend IHT to solve \eqref{weightedsparse} is quite natural:
we still
perform a constant step size gradient descent step at each iterate, however
instead of projecting onto the space of $s$-sparse vectors, we project onto the
space of weighted sparse vectors
$\mathcal{S}_{\omega,s} = \{ \bm{x} : \|\bm{x}\|_{\omega, 0} \leq s\}$. This
algorithm will
be
referred to as Iterative Hard Weighted Thresholding (IHWT)  and it is given by
the
following iteration
\begin{equation}
\label{IHWT} \bm{x}^{n+1} = H_{\omega,s} (\bm{x}^{n} + \bm{A}^*(\bm{y} -
\bm{Ax}^n)), 
\end{equation}
where $H_{\omega,s}$ is the \textit{hard weighted thresholding operator} and it
computes projections onto the space of weighted sparse vectors
$\mathcal{S}_{\omega,s}$
\begin{equation}\label{weightedthresholdingoperator}
H_{\omega,s}(\bm{x}) = \inf_{\|\bm{z}\|_{\omega,0} \leq s} \|\bm{x}-\bm{z}\|_2. 
\end{equation}

Computing the projection $H_{\omega,s}(\bm{x})$ is not as straightforward as
computing $H_{s}(\bm{x})$. In particular, sorting the signal by the magnitude of
its entries and then thresholding does not produce the best $(\omega,s)$-sparse
approximation. To see why, consider the simple example where $N = 3, \omega =
[1, \sqrt{2}, \sqrt{3}]$ and take the signal $\bm{x} = [9, 9, 10]$. For $s = 3$,
by sorting and thresholding, we obtain the following weighted 3 sparse
approximation $\bm{x}^* = [0,0,10]$ and $\|\bm{x}-\bm{x}^*\|_2 = 9\sqrt{2}$.
However, the other 3 sparse approximation $\widehat{\bm{x}} = [9,9,0]$ is in
fact a more accurate weighted 3 sparse approximation as $\|\bm{x} -
\widehat{\bm{x}}\|_2 = 10 < 9\sqrt{2}$. 

Therefore unlike the unweighted case, computing the best weighted $s$-sparse
approximation consists of a combinatorial search. To illustrate the difficulty
of executing this search, consider the case of a weight parameter $\omega$ given
by $\omega(j) = \sqrt{j}$ for $j = 1,\dots, N$. In this case, computing all the
possible index sets of weighted cardinality $s$ is equivalent to computing all
the
partitions of $s$ consisting of unique parts. With the square root weight
parameter, Wolfram Mathematica \cite{Wolfram} computes that there are 444,794
possible subsets
of weighted sparsity $s = 100$ and it computes that there are
8,635,565,795,744,155,161,506 support sets of size $s = 1000$. 

Despite this intractability, in the next subsection we derive theoretical
guarantees for the IHWT algorithm
and in Section 4, we will explore the empirical performance of a surrogate to
approximate the projection onto $\mathcal{S}_{\omega,s}$.

\subsection{Performance Guarantees} 
Throughout this subsection, we will employ the following notation:
\begin{enumerate}
\item $\bm{x}^s = H_{\omega,s}(\bm{x})$ as defined by
\eqref{weightedthresholdingoperator},
\item $\bm{r}^n = \bm{x}^s - \bm{x}^n$,
\item $S = \textrm{supp}(\bm{x}^s)$,
\item $\bm{x}_{\overline{S}} = \bm{x} - \bm{x}^s$,
\item $S^n = \textrm{supp}(\bm{x}^n)$,
\item $T^{n} = S \cup S^n$,
\item $\bm{a}^{n+1} = \bm{x}^n + \bm{A}^*(\bm{y}-\bm{Ax}^n) = \bm{x}^n +
\bm{A}^*(\bm{Ax} + \bm{e} - \bm{Ax}^n) =  \bm{x}^n + \bm{A}^*(\bm{Ax}^s +
\bm{Ax}_{\overline{S}} + \bm{e} - \bm{Ax}^n)$.
\end{enumerate}

\subsubsection{Performance Guarantees: Convergence to a Neighborhood}
Here we derive performance guarantees which establish that IHWT will converge to
a neighborhood of the best $(\omega,s)$-sparse approximation with a linear
convergence rate. The size of the neighborhood is dependent on how well the true
signal $\bm{x}$ is approximated by $\bm{x}^s$. 

In \cite{IHT}, the following performance guarantee was established \footnote{We
use
different notation than that of the original authors Blumensath and Davies.}:
\begin{theorem}\label{IHTguarantee}
Let $\bm{y} = \bm{Ax} + \bm{e}$ denote a set of noisy observations where $\bm{x}$ is an
arbitrary vector. Let
$\bm{x}^s$ be an approximation to $\bm{x}$ with no more than $s$ non-zero
elements
for
which $\|\bm{x}-\bm{x}^s\|_2$ is minimal. If $\bm{A}$ has restricted isometry
property with
$\delta_{3s} < 1/\sqrt{32}$, then at iteration $n$, IHT as defined by
\eqref{IHT} will recover an
approximation $\bm{x}^{n}$ satisfying
\begin{equation}\label{IHTrate}
\|\bm{x} - \bm{x}^{n}\|_2  \leq 2^{-n}\|\bm{x}^s\|_2 + 6 \widetilde{\epsilon}_s.
\end{equation} 
where
\begin{equation}\label{IHTerrorterm}
\widetilde{\epsilon}_s = \|\bm{x}-\bm{x}^s\|_2 + \frac{1}{\sqrt{s}}
\|\bm{x}-\bm{x}^s\|_1 + \|\bm{e}\|_2.
\end{equation}
\end{theorem}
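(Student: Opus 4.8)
The plan is to recast the theorem as a statement about convergence of $\bm{x}^n$ to the best $s$-sparse vector $\bm{x}^s$ and only convert to the stated bound on $\|\bm{x}-\bm{x}^n\|_2$ by the triangle inequality at the very end. First I would fold the approximation error of $\bm{x}$ by $\bm{x}^s$ into an \emph{effective noise} term: writing $\bm{y}=\bm{Ax}+\bm{e}=\bm{Ax}^s+\bm{\eta}$ with $\bm{\eta}:=\bm{Ax}_{\overline{S}}+\bm{e}$, the iteration \eqref{IHT} becomes exactly IHT applied to the genuinely $s$-sparse signal $\bm{x}^s$ from measurements corrupted by $\bm{\eta}$. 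This reduces the theorem to a clean sparse-recovery estimate together with a bound on $\|\bm{\eta}\|_2$.

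The heart of the argument is a one-step contraction for $\bm{r}^n=\bm{x}^s-\bm{x}^n$. Since $\bm{x}^{n+1}=H_s(\bm{a}^{n+1})$ is the best $s$-sparse approximation to $\bm{a}^{n+1}$ and $\bm{x}^s$ is itself $s$-sparse, optimality gives $\|\bm{a}^{n+1}-\bm{x}^{n+1}\|_2\le\|\bm{a}^{n+1}-\bm{x}^s\|_2$; expanding the squares and rearranging yields $\|\bm{r}^{n+1}\|_2^2\le 2\langle\bm{r}^{n+1},\,\bm{x}^s-\bm{a}^{n+1}\rangle$. I would then substitute the identity $\bm{x}^s-\bm{a}^{n+1}=(\bm{I}-\bm{A}^*\bm{A})\bm{r}^n-\bm{A}^*\bm{\eta}$, split the inner product, and invoke the classical ($\omega\equiv 1$) specialization of Lemma \ref{classicRIPbounds}: the first piece is controlled by $\delta_{3s}\|\bm{r}^{n+1}\|_2\|\bm{r}^n\|_2$ because $\mathrm{supp}(\bm{r}^{n+1})\cup\mathrm{supp}(\bm{r}^n)\subseteq S\cup S^n\cup\mathrm{supp}(\bm{x}^{n+1})$ has at most $3s$ elements, and the second by $\sqrt{1+\delta_{2s}}\,\|\bm{r}^{n+1}\|_2\|\bm{\eta}\|_2$ since $\bm{r}^{n+1}$ lives on $T^{n+1}$ of size at most $2s$. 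Dividing by $\|\bm{r}^{n+1}\|_2$ gives a recursion $\|\bm{r}^{n+1}\|_2\le\mu\,\|\bm{r}^n\|_2+c\,\|\bm{\eta}\|_2$, where $\mu$ is a fixed multiple of $\delta_{3s}$ (the bound of \cite{IHT} uses $\mu=\sqrt{8}\,\delta_{3s}$, which is precisely why the hypothesis is $\delta_{3s}<1/\sqrt{32}$: it forces $\mu<1/2$). Unrolling the recursion as a geometric series, with $\bm{x}^0=\bm{0}$ so that $\bm{r}^0=\bm{x}^s$, produces $\|\bm{r}^n\|_2\le 2^{-n}\|\bm{x}^s\|_2+\tfrac{c}{1-\mu}\|\bm{\eta}\|_2\le 2^{-n}\|\bm{x}^s\|_2+2c\,\|\bm{\eta}\|_2$.

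It remains to bound $\|\bm{\eta}\|_2$ in terms of $\widetilde{\epsilon}_s$ and to pass back to $\bm{x}$. For the noise I would use $\|\bm{\eta}\|_2\le\|\bm{Ax}_{\overline{S}}\|_2+\|\bm{e}\|_2$ together with the standard tail estimate $\|\bm{Ax}_{\overline{S}}\|_2\le\sqrt{1+\delta}\big(\|\bm{x}-\bm{x}^s\|_2+\tfrac{1}{\sqrt{s}}\|\bm{x}-\bm{x}^s\|_1\big)$, which one obtains by sorting the tail $\bm{x}_{\overline{S}}$ by magnitude, partitioning it into consecutive blocks of size $s$, applying the RIP block by block, and summing via the sorted-coordinate inequality $\|\bm{x}_{T_{j+1}}\|_2\le s^{-1/2}\|\bm{x}_{T_j}\|_1$. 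The right-hand side is exactly $\widetilde{\epsilon}_s$ of \eqref{IHTerrorterm} up to constants. Finally the triangle inequality $\|\bm{x}-\bm{x}^n\|_2\le\|\bm{x}-\bm{x}^s\|_2+\|\bm{r}^n\|_2$ contributes one further copy of $\|\bm{x}-\bm{x}^s\|_2\le\widetilde{\epsilon}_s$, so assembling the pieces gives the claimed form $2^{-n}\|\bm{x}^s\|_2+6\widetilde{\epsilon}_s$.

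The step I expect to demand the most care is not any single inequality but the bookkeeping of constants so that they collapse to the stated value $6$. The precise multiplier in the one-step recursion (which dictates the threshold $1/\sqrt{32}$ rather than the sharper $1/4$ one gets from a tighter split), the geometric-series factor $1/(1-\mu)$, and the several occurrences of $\sqrt{1+\delta}$ in both the $\bm{A}^*\bm{\eta}$ estimate and the tail bound must all be tracked and crudely majorized using $\delta_{3s}<1/\sqrt{32}$ in a way that leaves exactly the constant $6$. The tail-block decomposition is routine but is the only genuinely nontrivial estimate beyond a direct application of Lemma \ref{classicRIPbounds}.
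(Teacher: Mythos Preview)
Your proposal is correct and follows the same overall architecture the paper uses: rewrite $\bm{y}=\bm{Ax}^s+\bm{\eta}$ with effective noise $\bm{\eta}=\bm{Ax}_{\overline{S}}+\bm{e}$, derive a one-step contraction for $\bm{r}^n=\bm{x}^s-\bm{x}^n$ from the optimality of $H_s$ together with the RIP bounds of Lemma~\ref{classicRIPbounds}, unroll the geometric series, bound $\|\bm{Ax}_{\overline{S}}\|_2$ via the block-decomposition tail estimate (Proposition~\ref{CosampRIPprop}), and finish with the triangle inequality $\|\bm{x}-\bm{x}^n\|_2\le\|\bm{x}-\bm{x}^s\|_2+\|\bm{r}^n\|_2$.

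The one place where you deviate is in the mechanics of the contraction step. The paper (see the proof of Theorem~\ref{IHWTguarantee}, which it declares identical to the unweighted case) restricts to $T^{n+1}$, uses the triangle inequality to get $\|\bm{r}^{n+1}\|_2\le 2\|\bm{x}^s_{T^{n+1}}-\bm{a}^{n+1}_{T^{n+1}}\|_2$, and then splits $\bm{r}^n=\bm{r}^n_{T^{n+1}}+\bm{r}^n_{T^n\setminus T^{n+1}}$ before applying RIP to each piece; combining the two pieces via $a+b\le\sqrt{2(a^2+b^2)}$ is what produces $\mu=\sqrt{8}\,\delta_{3s}$ and hence the threshold $1/\sqrt{32}$. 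Your route---expanding squares to obtain $\|\bm{r}^{n+1}\|_2^2\le 2\langle\bm{r}^{n+1},(\bm{I}-\bm{A}^*\bm{A})\bm{r}^n-\bm{A}^*\bm{\eta}\rangle$ and then applying the inner-product form of Lemma~\ref{classicRIPbounds} once on the $3s$-sparse union $S\cup S^n\cup S^{n+1}$---is slightly cleaner and gives the tighter $\mu=2\delta_{3s}$, exactly the ``sharper $1/4$'' you allude to. You correctly identify that matching the paper's stated constant $6$ (rather than something smaller) requires reverting to the cruder $\sqrt{8}\,\delta_{3s}$ bookkeeping; everything else in your plan, including the sorted-block argument for Proposition~\ref{CosampRIPprop}, coincides with the paper's.
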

In other words, IHT guarantees a linear convergence rate up to the
\textit{unrecoverable energy} $\widetilde{\epsilon}_s$. This
$\widetilde{\epsilon}_s$ term is referred to as unrecoverable energy as it
contains the measurement noise and energy terms which measure how well a signal
$\bm{x}$ can be approximated by sparse signals. 

We reverse course and instead focus our attention on an intermediate, yet more
general error bound for IHT:
\begin{theorem}\label{IHTGeneralGuaranteeTHM}
Let $\bm{y} = \bm{Ax} + \bm{e}$ denote a set of noisy observations where $\bm{x}$ is an
arbitrary vector. Let
$\bm{x}^s$ be an approximation to $s$ with no more than $s$ non-zero elements
for
which $\|\bm{x}-\bm{x}^s\|_2$ is minimal. If $\bm{A}$ has restricted isometry
property with
$\delta_{3s} < 1/\sqrt{32}$, then at iteration $n$, IHT as defined by
\eqref{IHT} will recover an
approximation $\bm{x}^{n}$ satisfying
\begin{equation}\label{IHTGeneralrate}
\|\bm{x} - \bm{x}^{n}\|_2  \leq 2^{-n}\|\bm{x}^s\|_2 + \|\bm{x}-\bm{x}^s\|_2 +
4.34\|\bm{Ax}_{\overline{S}} +
\bm{e}\|_2. 
\end{equation} 
\end{theorem}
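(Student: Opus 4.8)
The plan is to reproduce the Blumensath--Davies contraction argument: I track the residual $\bm{r}^n = \bm{x}^s - \bm{x}^n$ through one iteration, extract a linear recursion whose contraction factor the hypothesis $\delta_{3s} < 1/\sqrt{32}$ forces below $1/2$, and then sum a geometric series. I use the notation fixed at the start of the subsection and the unweighted ($\omega \equiv 1$) specializations of the estimates in Lemma~\ref{classicRIPbounds}. The first step is to exploit the optimality of hard thresholding: since $\bm{x}^{n+1} = H_s(\bm{a}^{n+1})$ is a best $s$-sparse approximation to $\bm{a}^{n+1}$ and $\bm{x}^s$ is itself $s$-sparse, $\|\bm{a}^{n+1} - \bm{x}^{n+1}\|_2 \le \|\bm{a}^{n+1} - \bm{x}^s\|_2$; as $\bm{x}^{n+1}$ and $\bm{x}^s$ both vanish off $T^{n+1} = S \cup S^{n+1}$, cancelling the common part on $(T^{n+1})^c$ and passing through $(\bm{a}^{n+1})_{T^{n+1}}$ by the triangle inequality yields
\begin{equation*}
\|\bm{r}^{n+1}\|_2 \le 2\,\|(\bm{a}^{n+1} - \bm{x}^s)_{T^{n+1}}\|_2 .
\end{equation*}

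Next I would substitute $\bm{y} = \bm{Ax}^s + \bm{Ax}_{\overline{S}} + \bm{e}$ into the definition of $\bm{a}^{n+1}$ and rearrange to $\bm{a}^{n+1} - \bm{x}^s = -(\bm{I} - \bm{A}^*\bm{A})\bm{r}^n + \bm{A}^*(\bm{Ax}_{\overline{S}} + \bm{e})$. Restricting to $T^{n+1}$ and splitting by the triangle inequality separates a Gram term from a noise term. Since $\mathrm{supp}(\bm{r}^n) \subseteq T^n$, the set $T^{n+1} \cup \mathrm{supp}(\bm{r}^n)$ lies in $S \cup S^n \cup S^{n+1}$ and so has cardinality at most $3s$; partitioning $T^{n+1}$ into the two blocks $S$ and $S^{n+1}\setminus S$ of size at most $s$ and applying the second estimate of Lemma~\ref{classicRIPbounds} to each block bounds the Gram term by $\sqrt{2}\,\delta_{3s}\|\bm{r}^n\|_2$. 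Because $\|T^{n+1}\|_0 \le 2s$, the third estimate of Lemma~\ref{classicRIPbounds} bounds the noise term by $\sqrt{1+\delta_{2s}}\,\|\bm{Ax}_{\overline{S}} + \bm{e}\|_2$. Feeding these into the previous display gives the recursion
\begin{equation*}
\|\bm{r}^{n+1}\|_2 \le \sqrt{8}\,\delta_{3s}\,\|\bm{r}^n\|_2 + 2\sqrt{1+\delta_{2s}}\,\|\bm{Ax}_{\overline{S}} + \bm{e}\|_2 .
\end{equation*}

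Under $\delta_{3s} < 1/\sqrt{32}$ the contraction factor obeys $\sqrt{8}\,\delta_{3s} < 1/2$, so starting from $\bm{x}^0 = \bm{0}$ (hence $\bm{r}^0 = \bm{x}^s$) and unrolling the recursion as a geometric series gives
\begin{equation*}
\|\bm{r}^n\|_2 \le 2^{-n}\|\bm{x}^s\|_2 + \frac{2\sqrt{1+\delta_{2s}}}{1 - \sqrt{8}\,\delta_{3s}}\,\|\bm{Ax}_{\overline{S}} + \bm{e}\|_2 ,
\end{equation*}
where $(\sqrt{8}\,\delta_{3s})^n \le 2^{-n}$ and, using $\delta_{2s} \le \delta_{3s} < 1/\sqrt{32}$, the noise coefficient is at most $4\sqrt{1 + 1/\sqrt{32}} < 4.34$. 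The triangle inequality $\|\bm{x} - \bm{x}^n\|_2 \le \|\bm{x} - \bm{x}^s\|_2 + \|\bm{r}^n\|_2$, together with $\|\bm{x} - \bm{x}^s\|_2 = \|\bm{x}_{\overline{S}}\|_2$, then delivers \eqref{IHTGeneralrate}.

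I expect the main obstacle to be bookkeeping rather than ideas: one must verify that the Gram term genuinely sits on a set of size $3s$ (this is what forces the $\delta_{3s}$ hypothesis) while the noise term needs only $2s$, and one must estimate the restricted residual in exactly the two-block manner that produces the factor $\sqrt{8}$, so that the contraction rate equals $1/2$ and the noise coefficient equals $4.34$ at the RIP boundary. A single-shot estimate of the Gram term on all of $T^{n+1}$ gives the sharper factor $2\delta_{3s}$ and a smaller noise constant, so the stated bound is in any case comfortably implied.
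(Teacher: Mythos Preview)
Your proposal is correct and follows essentially the same Blumensath--Davies contraction argument the paper uses (the paper spells it out for Theorem~\ref{IHWTguarantee} and remarks that the unweighted case is identical): optimality of the thresholding step gives the factor $2$, expanding $\bm{a}^{n+1}$ and applying the RIP bounds of Lemma~\ref{classicRIPbounds} yields the recursion $\|\bm{r}^{n+1}\|_2 \le \sqrt{8}\,\delta_{3s}\|\bm{r}^n\|_2 + 2\sqrt{1+\delta_{2s}}\|\bm{Ax}_{\overline S}+\bm{e}\|_2$, and summing the geometric series produces the constant $4.34$. The only cosmetic difference is that the paper obtains the $\sqrt{2}$ factor by splitting the \emph{input} $\bm{r}^n$ along $T^{n+1}$ and $T^n\setminus T^{n+1}$, whereas you split the \emph{output} index set $T^{n+1}$ into $S$ and $S^{n+1}\setminus S$; both decompositions are valid and give identical constants.
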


To pass from \eqref{IHTGeneralrate} to \eqref{IHTrate}--\eqref{IHTerrorterm},
Blumensath and
Davies used the following energy bound for RIP matrices from \cite{cosamp}:
\begin{prop}\label{CosampRIPprop}
Suppose that $\bm{A}$ verifies the upper inequality
$$
\|\bm{Ax}\|_2 \leq \sqrt{1+\delta_s}\|\bm{x}\|_2, \textup{ when } \|\bm{x}\|_0
\leq s. 
$$ 
Then, for every signal $\bm{x}$,
\begin{equation}\label{CosampRIP}
 \|\bm{Ax}\|_2 \leq \sqrt{1+\delta_s}\left[  \|\bm{x}\|_2 +
\frac{1}{\sqrt{s}}\|\bm{x}\|_1 
\right].
\end{equation}
\end{prop}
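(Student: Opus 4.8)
The plan is to reduce this general-signal statement to the $s$-sparse hypothesis by splitting $\bm{x}$ into $s$-sparse pieces and applying the triangle inequality. First I would order the coordinates of $\bm{x}$ by nonincreasing magnitude and partition the index set into consecutive blocks $T_0, T_1, T_2, \dots$, where $T_0$ collects the indices of the $s$ largest entries in magnitude, $T_1$ the next $s$ largest, and so on, with the final block possibly containing fewer than $s$ indices. Writing $\bm{x} = \sum_{k \geq 0} \bm{x}_{T_k}$ and noting that each restriction $\bm{x}_{T_k}$ is $s$-sparse, the triangle inequality together with the hypothesized upper bound gives
$$\|\bm{Ax}\|_2 \leq \sum_{k \geq 0}\|\bm{Ax}_{T_k}\|_2 \leq \sqrt{1+\delta_s}\sum_{k \geq 0}\|\bm{x}_{T_k}\|_2.$$
It then suffices to show that $\sum_{k \geq 0}\|\bm{x}_{T_k}\|_2 \leq \|\bm{x}\|_2 + \frac{1}{\sqrt{s}}\|\bm{x}\|_1$.

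The crux, and the only step that is not purely mechanical, is the per-block estimate $\|\bm{x}_{T_k}\|_2 \leq \frac{1}{\sqrt{s}}\|\bm{x}_{T_{k-1}}\|_1$ for every $k \geq 1$. This is where the sorting is used: because the blocks are arranged in order of decreasing magnitude, every entry indexed by $T_k$ has magnitude at most the smallest magnitude occurring in $T_{k-1}$, and the latter is in turn at most the average $\frac{1}{s}\|\bm{x}_{T_{k-1}}\|_1$ over the $s$ entries of $T_{k-1}$. Squaring this bound, summing over the at most $s$ indices of $T_k$, and taking square roots yields the claim. I would treat the leading term separately, using only the trivial bound $\|\bm{x}_{T_0}\|_2 \leq \|\bm{x}\|_2$.

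To finish, I would sum the per-block inequality over $k \geq 1$ and reindex, which telescopes the right-hand side into a single $\ell_1$ norm:
$$\sum_{k \geq 1}\|\bm{x}_{T_k}\|_2 \leq \frac{1}{\sqrt{s}}\sum_{k \geq 1}\|\bm{x}_{T_{k-1}}\|_1 = \frac{1}{\sqrt{s}}\sum_{k \geq 0}\|\bm{x}_{T_k}\|_1 = \frac{1}{\sqrt{s}}\|\bm{x}\|_1,$$
the last equality holding because the blocks partition the support of $\bm{x}$. Adding back the $k=0$ term and substituting into the bound on $\|\bm{Ax}\|_2$ produces exactly \eqref{CosampRIP}. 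No serious difficulty is expected: the one piece of bookkeeping is the final short block, but this causes no trouble, since a short block only ever appears on the left of the per-block estimate, where the inequality $|T_k| \leq s$ is all that the squaring step requires, while each block appearing on the right as $T_{k-1}$ is automatically full of size $s$.
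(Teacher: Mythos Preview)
Your proof is correct. It is, however, a different argument from the one the paper outlines for this proposition. The paper follows the original CoSaMP presentation and sketches a convex-duality argument: one shows an inclusion of polar bodies $S^\circ \subset K^\circ$ by taking any $\bm{u}$ in the unit ball of the norm $\|\bm{u}\|_{S^\circ} = \max_{|I|\leq s}\|\bm{u}_I\|_2$, splitting off its $s$ most energetic coordinates $\bm{u}_S$, and observing that the tail satisfies $\|\bm{u}_{\overline S}\|_\infty \leq 1/\sqrt{s}$, whence $\bm{u} \in B_2 + \tfrac{1}{\sqrt{s}}B_\infty = K^\circ$; the operator bound then follows by polarity. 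Your route---sort, cut into consecutive $s$-blocks, use the per-block estimate $\|\bm{x}_{T_k}\|_2 \leq \tfrac{1}{\sqrt{s}}\|\bm{x}_{T_{k-1}}\|_1$, and sum---is the more elementary direct argument and avoids any appeal to duality. It is worth noting that your approach is exactly the unweighted specialization of the proof the paper gives for the weighted analogue, Proposition~\ref{WeightedRIPprop}: the nonincreasing rearrangement \eqref{nonincreasing-rearrangement} and the chain \eqref{convexweightedaverage}--\eqref{linftynorm} reduce to your ``smallest entry is at most the average'' step when all $\omega_j=1$. So while your argument deviates from the source the paper cites for this proposition, it aligns perfectly with the technique the paper itself adopts downstream.
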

Applying \eqref{CosampRIP} to $\bm{Ax}_{\overline{S}}$ in
\eqref{IHTGeneralrate} yields \eqref{IHTrate}--\eqref{IHTerrorterm}. 

The proof of Proposition \ref{CosampRIPprop} boils down to establishing an
inclusion of
polar spaces: $S^\circ \subset K^\circ$. $S^\circ$ is equipped with the
following norm
$$
\|\bm{u}\|_{S^\circ} = \max_{|I|\leq r} \|\bm{u}_I\|_2. 
$$
The proof proceeds by considering any element $\bm{u}$ of the unit ball in
$S^\circ$. We decompose $\bm{u}$ into two
components: $\bm{u}_S$ and $\bm{u}_{\overline{S}}$ where $\bm{u}_S$ represents
the best
$s$-sparse approximation to $\bm{u}$ in the $\ell_2$ norm. As $S$ contains the
$s$ most energetic atoms, this implies that
the set $\overline{S}$ contains atoms whose energy must lie under a certain
threshold: the bound $\|\bm{u}_{\overline{S}}\|_\infty \leq \frac{1}{\sqrt{s}}$
is easily
obtained. In other words, the following decomposition is obtained:
$$
\bm{u} = \bm{u}_S + \bm{u}_{\overline{S}} \in B_2 + \frac{1}{\sqrt{s}}B_\infty,
$$
and the space on the right hand side is exactly the space $K^\circ$. For
further details, consult \cite{cosamp}, in this article we will only be
concerned
with this particular aspect of their proof.

This sort of decomposition does not hold for the weighted case. Consider the
example in which the weight 
vector $\omega$ is such that $\omega(j) = \sqrt{j}$. As mentioned before, with
such a weight vector $\omega$, any $s$ sparse signal cannot have any atom of
index higher than $\ceil{s^{1/2}}$ supported. Therefore, taking the best
$(\omega,s)$-sparse approximation to a signal does not constrain the
$\ell_\infty$ norm of the signal on the complement $\overline{S}$.
As a result of this, Proposition \ref{CosampRIPprop} does not extend to the
weighted case and an alternative method will be needed to bound the energy of
$\bm{Ax}_{\overline{S}}$. Here we see a key difference between unweighted
sparsity and weighted sparsity: significant amounts of energy can be
concentrated in the tail $\bm{x} - H_{\omega,s}(\bm{x})$. More specifically we
see that certain weight vectors can yield the process of
taking the best $(\omega,s)$-sparse approximation to be an operation which is
inherently
local as it may restrict the analysis to lie on a subset of low weight atoms and
the higher weight atoms are completely ignored.

An alternative method of bounding the term $\|\bm{Ax}\|_2$ involves a different
type of decomposition. Suppose $\bm{A}$ satisfies the RIP of order $s$ with RIP
constant $\delta_s$. Let $\{S_i\}_{i = 1}^{p}$ be a partition of $[N]$ into
$s$-sparse blocks: each $S_i$ satisfies: for all $i\neq j, S_i \cap S_j =
\emptyset$ and $\textrm{card}(S_i) \leq s$. We may apply the RIP to each $S_i$
block to obtain the following bound:
\begin{align*} 
\|\bm{Ax}\|_2 = \| \bm{A} (\displaystyle \sum_i \bm{x}_{S_i})  \|_2 &\leq \sum_i
\|\bm{Ax}_{S_i}\|_2
\\&\leq \sqrt{1+\delta_s} \sum_i \|\bm{x}_{S_i}\|_2 
\\&\leq \sqrt{ (1+\delta_s)/s   } \sum_i \|\bm{x}_{S_i}\|_1 = \sqrt{
(1+\delta_s)/s   }\|\bm{x}\|_1 \label{RIPSblocksend}
\end{align*}
This sort of argument in general will not extend to the weighted case. Depending
on the weight vector $\omega$ such a decomposition of an arbitrary signal into a
collection of disjoint
$s$-sparse blocks may not even be possible. 

Therefore, the performance guarantee given in Theorem \ref{IHTguarantee} does
not
directly extend to
the weighted case. The more general guarantee of Theorem
\ref{IHTGeneralGuaranteeTHM}
however, does extend, and the proof is identical to the proof in the unweighted
case. 
\begin{theorem}\label{IHWTguarantee} Let $\omega \in \mathbb{C}^N$ denote a
weight vector with $\omega(i) \geq 1$ for all $i = 1,\dots, N$. 
Let $\bm{y} = \bm{Ax} + \bm{e}$ denote a set of noisy observations where $\bm{x}$ is an
arbitrary vector. If $\bm{A}$ has weighted restricted isometry
property with
$\delta_{\omega,3s} < 1/\sqrt{32}$, then at iteration $n$, IHWT as defined by
\eqref{IHWT} will recover an
approximation $\bm{x}^{n}$ satisfying
\begin{equation}\label{IHWTrategeneric}
\|\bm{x} - \bm{x}^{n}\|_2  \leq 2^{-n}\|\bm{x}^s\|_2 + \|\bm{x}-\bm{x}^s\|_2 +
4.34\|\bm{Ax}_{\overline{S}} +
\bm{e}\|_2. 
\end{equation} 
\end{theorem}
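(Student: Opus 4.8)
The plan is to mirror the Blumensath--Davies argument verbatim, replacing every appeal to the RIP by the corresponding $\omega$-RIP bound from Lemma \ref{classicRIPbounds}. The engine of the proof is a one-step contraction for the error $\bm{r}^{n} = \bm{x}^s - \bm{x}^n$. First I would exploit the defining optimality of the projection: since $\bm{x}^{n+1} = H_{\omega,s}(\bm{a}^{n+1})$ minimizes $\|\bm{a}^{n+1} - \bm{z}\|_2$ over $\{\bm{z} : \|\bm{z}\|_{\omega,0} \le s\}$ and $\bm{x}^s$ itself is $(\omega,s)$-sparse, we have $\|\bm{a}^{n+1} - \bm{x}^{n+1}\|_2 \le \|\bm{a}^{n+1} - \bm{x}^s\|_2$. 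Because both $\bm{x}^{n+1}$ and $\bm{x}^s$ are supported in $T^{n+1} = S \cup S^{n+1}$, the components of $\bm{a}^{n+1}$ off $T^{n+1}$ cancel from both sides, so the inequality restricts to $T^{n+1}$; a triangle inequality then yields $\|\bm{r}^{n+1}\|_2 \le 2\,\|(\bm{a}^{n+1} - \bm{x}^s)_{T^{n+1}}\|_2$.

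Next I would expand the residual. Substituting $\bm{y} = \bm{Ax}^s + \bm{Ax}_{\overline{S}} + \bm{e}$ into the definition of $\bm{a}^{n+1}$ gives the identity $\bm{a}^{n+1} - \bm{x}^s = (\bm{A}^*\bm{A} - \bm{I})\bm{r}^n + \bm{A}^*(\bm{Ax}_{\overline{S}} + \bm{e})$. Restricting to $T^{n+1}$ and applying the triangle inequality splits the bound into a contraction term and a noise term. For the contraction term I would apply the second inequality of Lemma \ref{classicRIPbounds} with $\bm{v} = \bm{r}^n$ (supported in $T^n$) and index set $T^{n+1}$; for the noise term I would apply the third inequality with index set $T^{n+1}$. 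This produces the recursion $\|\bm{r}^{n+1}\|_2 \le 2\delta_{\omega,3s}\|\bm{r}^n\|_2 + 2\sqrt{1+\delta_{\omega,2s}}\,\|\bm{Ax}_{\overline{S}} + \bm{e}\|_2$.

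The one place deserving genuine care---and the reason the hypothesis reads $\delta_{\omega,3s}$ rather than $\delta_{\omega,s}$---is verifying that the index sets fed into Lemma \ref{classicRIPbounds} really have the required weighted cardinality. The key fact is that weighted cardinality is subadditive over unions: since $\omega(A \cup B) \le \omega(A) + \omega(B)$, and each of $S$, $S^n$, $S^{n+1}$ has weighted cardinality at most $s$, the union $T^n \cup T^{n+1} = S \cup S^n \cup S^{n+1}$ satisfies $\|S \cup S^n \cup S^{n+1}\|_{\omega,0} \le 3s$, matching the second inequality, while $\|T^{n+1}\|_{\omega,0} \le 2s \le 3s$ covers the third. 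This is the weighted analogue of the trivial bound $|A \cup B| \le |A| + |B|$ used in the unweighted proof, and it is the only structural fact about weighted sparsity the argument needs; notably, no control of the tail $\bm{x}_{\overline{S}}$ or of $\|\cdot\|_\infty$ enters, which is precisely why this general bound extends while the sharper bound \eqref{IHTrate} (relying on Proposition \ref{CosampRIPprop}) does not.

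Finally I would unroll the recursion. With $\delta_{\omega,3s} < 1/\sqrt{32}$ the contraction factor satisfies $2\delta_{\omega,3s} < 1/2$, so starting from $\bm{x}^0 = \bm{0}$ (whence $\bm{r}^0 = \bm{x}^s$) we get $(2\delta_{\omega,3s})^n \le 2^{-n}$ on the initial-error term, while the geometric series in the noise term sums to at most $\sum_{k \ge 0} 2^{-k} = 2$; the resulting noise coefficient is $4\sqrt{1+\delta_{\omega,2s}} < 4\sqrt{1 + 1/\sqrt{32}} < 4.34$. This gives $\|\bm{x}^s - \bm{x}^n\|_2 \le 2^{-n}\|\bm{x}^s\|_2 + 4.34\,\|\bm{Ax}_{\overline{S}} + \bm{e}\|_2$, and a final triangle inequality $\|\bm{x} - \bm{x}^n\|_2 \le \|\bm{x} - \bm{x}^s\|_2 + \|\bm{x}^s - \bm{x}^n\|_2$ supplies the extra $\|\bm{x}-\bm{x}^s\|_2$ term to yield \eqref{IHWTrategeneric}. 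Since every step either is the definition of $H_{\omega,s}$ or invokes Lemma \ref{classicRIPbounds}, there is no real obstacle here: the proof is structurally identical to the unweighted one, and the only conceptual point is recognizing that subadditivity of $\omega(\cdot)$ is the sole ingredient that must survive the passage from classical to weighted sparsity.
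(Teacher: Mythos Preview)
Your proposal is correct and follows essentially the same Blumensath--Davies template as the paper: optimality of $H_{\omega,s}$ gives the factor-of-two bound on $T^{n+1}$, the $\omega$-RIP from Lemma~\ref{classicRIPbounds} controls the gradient-descent residual, and iterating the recursion with $\bm{x}^0=\bm{0}$ yields the stated constants. The one cosmetic difference is that the paper splits $\bm{r}^n$ into its $T^{n+1}$ and $T^n\setminus T^{n+1}$ parts and bounds each separately (incurring a $\sqrt{2}$ when recombining, hence the factor $\sqrt{8}\,\delta_{\omega,3s}$), whereas you invoke the second inequality of Lemma~\ref{classicRIPbounds} directly on the full $\bm{r}^n$ to obtain the cleaner factor $2\delta_{\omega,3s}$; both routes land at $4.34$ once one coarsens to contraction rate $1/2$.
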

\begin{proof}
We follow the proof presented in \cite{IHT}. By the triangle inequality we have
that:
\begin{equation}\label{IHWToptimalityPass0}
\|\bm{x} - \bm{x}^{n+1}\|_2 \leq \|\bm{x} - \bm{x}^s\|_2 + \|\bm{x}^{n+1} -
\bm{x}^s\|_2. 
\end{equation}
We focus on the term $ \|\bm{x}^{n+1} - \bm{x}^s\|_2$. This term is supported on
$T^{n+1}$ and we may therefore restrict our analysis to this index set. By the
triangle inequality we have:
$$
\|\bm{x}^s - \bm{x}^{n+1}\|_2 \leq \| \bm{x}^s_{T^{n+1}} -
\bm{a}^{n+1}_{T^{n+1}}\|_2  + \|\bm{x}^{n+1}_{T^{n+1}} -
\bm{a}^{n+1}_{T^{n+1}}\|_2 
$$
By definition of the thresholding operator $H_{\omega,s}$, the signal
$\bm{x}^{n+1}$ is the best weighted $s$ sparse approximation to $\bm{a}^{n+1}$. In
particular, $\bm{x}^{n+1}$ is a better weighted $s$ sparse approximation to
$\bm{a}^{n+1}$ than $\bm{x}^s$. We therefore obtain the inequality:
$$
\|\bm{x}^s - \bm{x}^{n+1}\|_2 \leq 2\|\bm{x}^s_{T^{n+1}} -
\bm{a}^{n+1}_{T^{n+1}}\|_2. 
$$
Expanding the term $\bm{a}^{n+1}$:
\begin{align*}
\|\bm{x}^s - \bm{x}^{n+1}\|_2 & \leq 2\|\bm{x}^s_{T^{n+1}} - \bm{x}^n_{T^{n+1}} -
\bm{A}^*_{T^{n+1}}\bm{Ar}^n + \bm{A}^*_{T^{n+1}}\bm{Ax}_{\overline{S}} +
\bm{A}^*_{T^{n+1}}\bm{e}\|_2
\\&\leq 2\|\bm{r}^n_{T^{n+1}} - \bm{A}^*_{T^{n+1}}\bm{Ar}^n\|_2 + 2\|
\bm{A}^*_{T^{n+1}}(\bm{Ax}_{\overline{S}} +\bm{e})\|_2
\\&= 2\|(\bm{I} - \bm{A}^*_{T^{n+1}}\bm{A}_{T^{n+1}}) \bm{r}^n_{T^{n+1}} -
\bm{A}^*_{T^{n+1}}\bm{A}_{T^n \setminus T^{n+1}}\bm{r}^n_{  T^n \setminus
T^{n+1} }\|_2 
\\ & \indent+  2\| \bm{A}^*_{T^{n+1}}(\bm{Ax}_{\overline{S}} +\bm{e})\|_2
\\ & \leq 2\|(\bm{I} - \bm{A}^*_{T^{n+1}}\bm{A}_{T^{n+1}})
\bm{r}^n_{T^{n+1}}\|_2 + 2\| \bm{A}^*_{T^{n+1}}\bm{A}_{T^n \setminus
T^{n+1}}\bm{r}^n_{  T^n \setminus T^{n+1} }\|_2 
\\ & \indent +  2\|\bm{A}^*_{T^{n+1}}(\bm{Ax}_{\overline{S}} +\bm{e})\|_2
\end{align*}
Note that $T^n \setminus T^{n+1}$ is disjoint from $T^{n+1}$ and $\|T^n
\cup T^{n+1}\|_{\omega,0} \leq 3s$. Applying the RIP bounds from \ref{classicRIPbounds}:
\begin{align*}
\|\bm{r}^{n+1}\|_2 &\leq 2\delta_{\omega,2s}\|\bm{r}^{n}_{T^{n+1}}\|_2 +
2\delta_{\omega,3s}\|\bm{r}^n_{T^n \setminus T^{n+1}}\|_2 + 2\sqrt{ 1 +
\delta_{\omega,2s}  } \|\bm{Ax}_{\overline{S}} + \bm{e}\|_2
\\ &\leq 2 \delta_{\omega,3s} \left(  \|\bm{r}^{n}_{T^{n+1}}\|_2 +
\|\bm{r}^n_{T^n \setminus T^{n+1}}\|_2 \right) +  2\sqrt{ 1 + \delta_{\omega,
3s}  } \|\bm{Ax}_{\overline{S}} + \bm{e}\|_2
\\ &\leq \sqrt{8} \delta_{\omega,3s} \|\bm{r}^n\|_2 +  2\sqrt{ 1 +
\delta_{\omega, 3s}  } \|\bm{Ax}_{\overline{S}} + \bm{e}\|_2.
\end{align*}
If we have that $\delta_{\omega, 3s}< \frac{1}{\sqrt{32}}$, then
$$
\|\bm{r}^{n+1}\|_2 \leq 0.5\|\bm{r}^n\|_2 + 2.17 \|\bm{Ax}_{\overline{S}} + e
\|_2. 
$$
Iterating this relationship and using the fact that $\sum_{i = 0}^\infty 2^{-i}
= 2$, we obtain the bound:
\begin{equation}\label{IHWToptimalityPass}
\|\bm{r}^{n}\|_2 < 2^{-n} \|\bm{x}^s\|_2 + 4.34  \|\bm{Ax}_{\overline{S}} +
\bm{e}\|_2.
\end{equation}
Combining \eqref{IHWToptimalityPass} with \eqref{IHWToptimalityPass0} proves
the desired claim. 
\end{proof}

Note that the proof has two main components: the hard thresholding operator
produces $\bm{x}^{n+1}$, a superior sparse approximation to the gradient descent
update $\bm{a}^{n+1}$ than $\bm{x}^s$ and applying the RIP. Moreover, the proof
never requires any details of the projection or even the space we are projecting
onto, unlike the proof of Proposition \ref{CosampRIPprop}, which uses special
properties of
the projection onto the space of unweighted $s$ sparse signals. This is
precisely why the IHWT performance guarantee and its corresponding analysis is
nearly identical to the IHT guarantee from Theorem \ref{IHTGeneralGuaranteeTHM}.

The existence of weights which are known to produce signals with heavy tails is
the main blockage to the extension of some more detailed performance guarantees,
like that of Theorem \ref{IHTGeneralGuaranteeTHM}. In the two cases before, one
notices
that the existence of heavy tailed signals prevented a decomposition of a signal
amenable to further analysis. However, for certain \textit{bounded} weight
parameters, for arbitrary signals $\bm{x}$, one may obtain a modified bound on
$\|\bm{Ax}\|_2$ in terms of \textit{weighted norms}. Indeed we obtain the
following weighted analogue of Proposition \ref{CosampRIPprop}:
\begin{prop}\label{WeightedRIPprop}
Consider a sparsity level $s$ and a weight parameter $\omega$ satisfying $s \geq
2\|\omega\|_\infty^2$. If $\bm{A}$ satisfies the $\omega$-RIP of order $s$ with
RIP constant $\delta_{\omega, s}$, then the following inequality holds for any
arbitrary signal $\bm{x}$:
\begin{equation}
\|\bm{Ax}\|_2 \leq \sqrt{1+\delta_{\omega,s}} \left( \|\bm{x}\|_2 +
\frac{2}{\sqrt{s}} \|\bm{x}\|_{\omega,1} \right)
\end{equation}
\end{prop}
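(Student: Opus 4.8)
The plan is to adapt the ``disjoint block'' decomposition sketched just above the statement (the second method for bounding $\|\bm{Ax}\|_2$) to the weighted setting, replacing unweighted $s$-sparse blocks by weighted $(\omega,s)$-sparse blocks. First I would order the coordinates of $\bm{x}$ by decreasing magnitude and carve the support greedily into consecutive blocks $S_0, S_1, S_2, \dots$, adding coordinates to the current block until the next one would push its weighted cardinality above $s$. Each block then satisfies $\omega(S_i)\le s$, so the upper $\omega$-RIP inequality applies to every $\bm{x}_{S_i}$, and since $\bm{x}=\sum_i \bm{x}_{S_i}$ the triangle inequality gives $\|\bm{Ax}\|_2 \le \sum_i \|\bm{Ax}_{S_i}\|_2 \le \sqrt{1+\delta_{\omega,s}}\sum_i\|\bm{x}_{S_i}\|_2$. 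Because $\|\bm{x}_{S_0}\|_2\le\|\bm{x}\|_2$, the whole proposition reduces to the tail estimate $\sum_{i\ge1}\|\bm{x}_{S_i}\|_2 \le \tfrac{2}{\sqrt{s}}\|\bm{x}\|_{\omega,1}$.

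Here the hypothesis $s\ge 2\|\omega\|_\infty^2$ enters. Since every individual weight satisfies $\omega_j^2\le s/2$, the greedy rule cannot close a block while its running weighted cardinality is still below $s/2$; hence each completed block $S_{i-1}$ has $\omega(S_{i-1})>s/2$. This is the weighted replacement for the unweighted fact that a full block contains exactly $s$ atoms. For the per-block comparison I would then exploit the sorted ordering: every entry of $\bm{x}_{S_i}$ is dominated in magnitude by $m_{i-1}:=\min_{k\in S_{i-1}}|x_k|$, so $\|\bm{x}_{S_i}\|_2 \le \sqrt{|S_i|}\,m_{i-1}\le \sqrt{s}\,m_{i-1}$. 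The tempting next move is to control $m_{i-1}$ by the weighted-$\ell_1$ mass of the preceding block, using $\omega(S_{i-1})>s/2$ to extract a factor $1/\sqrt{s}$, and then to sum the resulting telescoping bound against $\sum_{i\ge1}\|\bm{x}_{S_{i-1}}\|_{\omega,1}\le\|\bm{x}\|_{\omega,1}$ to produce the constant $2/\sqrt{s}$. The justification of that last inequality is exactly where the weighted case parts company with the classical one.

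The hard part will be precisely this passage from the sorted dominance $|x_j|\le m_{i-1}$ to a lower bound on $\|\bm{x}_{S_{i-1}}\|_{\omega,1}$ carrying the correct power of $s$. In the unweighted setting one simply uses that $S_{i-1}$ holds $s$ coordinates, each of magnitude at least $m_{i-1}$, so that $\|\bm{x}_{S_{i-1}}\|_1\ge s\,m_{i-1}$; this is the step with no clean analogue under heavy tails, because a block of weighted cardinality $>s/2$ may consist of only a few heavy atoms, in which case $\|\bm{x}_{S_{i-1}}\|_{\omega,1}$ is of order $\sqrt{s}\,m_{i-1}$ rather than $s\,m_{i-1}$, and a naive block-to-block telescoping then degrades to an $\mathcal{O}(1)$ constant. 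I expect the single delicate point to be controlling this heavy-atom regime --- either by showing the weight budget $\omega(S_{i-1})>s/2$ still forces enough weighted-$\ell_1$ mass into $S_{i-1}$, or, where it does not, by charging part of the energy of $\bm{x}_{S_i}$ against the weighted-$\ell_1$ mass of $S_i$ itself, i.e. by an amortized rather than strictly telescoping accounting. This is the same heavy-tail obstruction already flagged as the barrier to extending Proposition \ref{CosampRIPprop}. Once a per-block bound of the form $\|\bm{x}_{S_i}\|_2\lesssim \tfrac{1}{\sqrt s}(\|\bm{x}_{S_{i-1}}\|_{\omega,1}+\|\bm{x}_{S_i}\|_{\omega,1})$ is secured, the final assembly is the same bookkeeping as in the unweighted proof, and the $\sqrt{1+\delta_{\omega,s}}$ factor is inherited directly from the block-wise application of the $\omega$-RIP.
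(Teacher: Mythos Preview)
Your block--decomposition skeleton is right, and so is your reading of where the hypothesis $s\ge 2\|\omega\|_\infty^2$ enters (it forces every completed block to carry weighted cardinality at least $s/2$). The genuine gap is the sorting criterion: you order the coordinates by decreasing $|x_j|$, whereas the argument only closes if you order by decreasing $|x_j|\,\omega_j^{-1}$. The heavy--atom scenario you worry about is not a removable wrinkle under magnitude sorting; it is a real obstruction, and no amortized accounting will rescue it. If $S_{i-1}$ consists of a single atom of weight $\omega_k\approx\sqrt{s/2}$ while $S_i$ consists of $\Theta(s)$ light atoms all of magnitude $\approx m_{i-1}$, then $\|\bm{x}_{S_i}\|_2\approx\sqrt{s}\,m_{i-1}$ but $\|\bm{x}_{S_{i-1}}\|_{\omega,1}\approx\sqrt{s}\,m_{i-1}$ as well, so the ratio is $\Theta(1)$, not $\Theta(1/\sqrt{s})$; and since the entries in $S_i$ can be arbitrarily small, $\|\bm{x}_{S_i}\|_{\omega,1}$ gives you nothing to charge against either.

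Sorting by $|x_j|/\omega_j$ dissolves the difficulty in one line. For $j\in S_i$ you then have
\[
\frac{|x_j|}{\omega_j}\;\le\;\min_{k\in S_{i-1}}\frac{|x_k|}{\omega_k}\;\le\;\frac{1}{\omega(S_{i-1})}\sum_{k\in S_{i-1}}\omega_k^2\cdot\frac{|x_k|}{\omega_k}\;=\;\frac{\|\bm{x}_{S_{i-1}}\|_{\omega,1}}{\omega(S_{i-1})}\;\le\;\frac{2}{s}\,\|\bm{x}_{S_{i-1}}\|_{\omega,1},
\]
the middle step being the observation that a minimum is bounded by any convex average (take weights $\alpha_k=\omega_k^2/\omega(S_{i-1})$). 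Hence $|x_j|\le \tfrac{2\omega_j}{s}\|\bm{x}_{S_{i-1}}\|_{\omega,1}$, and squaring and summing over $j\in S_i$ against $\omega(S_i)\le s$ gives exactly $\|\bm{x}_{S_i}\|_2\le\tfrac{2}{\sqrt{s}}\|\bm{x}_{S_{i-1}}\|_{\omega,1}$ with no amortization. This is the nonincreasing rearrangement \emph{with respect to the weights} used in \cite{RW}; once you adopt it, the rest of your outline goes through verbatim.
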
 
\begin{proof}
The proof employs the same strategy used in the proof of Theorem 4.5 in
\cite{RW}.
Let $\bm{x} \in \mathbb{C}^N$. We will partition $[N]$ into weighted $s$ sparse
blocks $S_1, \dots, S_{p}$ for some index $p$ with each block satisfying $s -
\|\omega\|_\infty^2 \leq \omega(S_l) \leq s$. Furthermore, we assume that the
blocks $S_i$ are formed according to a \textit{nonincreasing rearrangement} of
$\bm{x}$ with respect to the weights, i.e.
\begin{equation}\label{nonincreasing-rearrangement}
|x_j| \omega_j^{-1} \leq |x_k|\omega_k^{-1} \textrm{ for all } j \in S_{l}
\textrm{ and for all } k \in S_{l-1}, l\geq 2.
\end{equation}
For any $k \in S_l$, set $\alpha_k = (\sum_{j \in S_l} \omega_j^2)^{-1}
\omega_k^2 \leq (s-\|\omega\|_\infty^2)^{-1} \omega_k^2$ by hypothesis. Notice
that $\sum_{k \in S_l} \alpha_k = 1$. For $l \geq 2$ then:
\begin{align}
|x_j| \omega_j^{-1} & \label{convexweightedaverage} \leq \sum_{k \in S_{l-1}}
\alpha_k |x_k| \omega_k^{-1} \textrm{ for any } j \in S_l
\\ &\label{linftynorm}\leq (s - \|\omega\|_\infty^2)^{-1} \sum_{k \in S_{l-1}}
|x_k| \omega_k^{-1} \omega_k^2 
\\&=  (s - \|\omega\|_\infty^2)^{-1} \sum_{k \in
S_{l-1}} |x_k| \omega_k
\\&  = (s - \|\omega\|_\infty^2)^{-1} \|\bm{x}_{S_{l-1}}\|_{\omega,1}. 
\end{align}
where \eqref{convexweightedaverage} holds by nonincreasing rearrangement and
convexity and \eqref{linftynorm} holds by hypothesis. 
Therefore, by the Cauchy-Schwarz inequality, we obtain:
$$
\|\bm{x}_{S_l}\|_2 \leq \frac{\sqrt{s}}{ s - \|\omega\|_\infty^2  }
\|\bm{x}_{S_{l-1}}\|_{\omega,1} \leq
\frac{2}{\sqrt{s}}\|\bm{x}_{S_{l-1}}\|_{\omega,1} \textrm{ for } l \geq 2.
$$

For $\|\bm{Ax}\|_2$, we obtain the following estimate:

\begin{align*}
\|\bm{Ax}\|_2 & \leq \sum_{i =1}^p \|\bm{Ax}_{S_i}\|_2 
\\&\leq \sqrt{1+\delta_{\omega,s}}  \sum_{i =1}^p \|\bm{x}_{S_i}\|_2   
\\&= \sqrt{1+\delta_{\omega,s}}\left(  \|\bm{x}_{S_1}\|_2  +  \sum_{i =2}^p
\|\bm{x}_{S_i}\|_2    \right)
\\&\leq \sqrt{1+\delta_{\omega,s}} \left( \|\bm{x}_{S_1}\|_2 +
\frac{2}{\sqrt{s}}\sum_{i = 1}^{p-1} \|\bm{x}_{S_i}\|_{\omega,1} \right)
\\&\leq \sqrt{1+\delta_{\omega,s}} \left( \|\bm{x}\|_2 + \frac{2}{\sqrt{s}}
\|\bm{x}\|_{\omega,1} \right).
\end{align*}
\end{proof}

Applying \ref{WeightedRIPprop} to $\|\bm{Ax}_{\overline{S}}\|_2$ immediately
yields the following performance bound:
\begin{theorem}\label{IHWTextendedguarantee} For sparsity level $s$, let $\omega
\in \mathbb{C}^N$ denote a
weight vector with $\omega(i) \geq 1$ for all $i = 1,\dots, N$ satisfying $s
\geq 2\|\omega\|_\infty^2$. 
Let $\bm{y} = \bm{Ax} + \bm{e}$ denote a set of noisy observations where $\bm{x}$ is an
arbitrary vector. If $\bm{A}$ has weighted restricted isometry
property with
$\delta_{\omega,3s} < 1/\sqrt{32}$, then at iteration $n$, IHWT as defined by
\eqref{IHWT} will recover an
approximation $\bm{x}^{n}$ satisfying
\begin{equation}\label{IHWTrate}
\|\bm{x} - \bm{x}^{n}\|_2  \leq 2^{-n}\|\bm{x}^s\|_2 + 6 \left(
\|\bm{x}-\bm{x}^s\|_2 + \frac{2}{\sqrt{s}}\|\bm{x}-\bm{x}^s\|_{\omega,1} +
\|\bm{e}\|_2\right). 
\end{equation} 
\end{theorem}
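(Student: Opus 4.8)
The plan is to obtain \eqref{IHWTrate} directly from the generic bound \eqref{IHWTrategeneric} of Theorem \ref{IHWTguarantee} by controlling its unrecoverable-energy term $\|\bm{Ax}_{\overline{S}} + \bm{e}\|_2$ with Proposition \ref{WeightedRIPprop}. All of the iterative and thresholding work has already been carried out inside Theorem \ref{IHWTguarantee}, so this final step makes no new appeal to the structure of the iteration: it is purely a matter of estimating the tail term and consolidating constants.

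First I would split off the noise by the triangle inequality, $\|\bm{Ax}_{\overline{S}} + \bm{e}\|_2 \leq \|\bm{Ax}_{\overline{S}}\|_2 + \|\bm{e}\|_2$. Recalling that $\bm{x}_{\overline{S}} = \bm{x} - \bm{x}^s$ is an arbitrary (in general non-weighted-sparse) signal, I would then invoke Proposition \ref{WeightedRIPprop} to get
$$
\|\bm{Ax}_{\overline{S}}\|_2 \leq \sqrt{1+\delta_{\omega,s}}\left( \|\bm{x}-\bm{x}^s\|_2 + \frac{2}{\sqrt{s}}\|\bm{x}-\bm{x}^s\|_{\omega,1} \right).
$$
The one point requiring care is the bookkeeping on hypotheses: Proposition \ref{WeightedRIPprop} needs $s \geq 2\|\omega\|_\infty^2$ together with the $\omega$-RIP of order $s$. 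The former is assumed in the theorem, and the latter follows from monotonicity of the weighted restricted isometry constants — any $(\omega,s)$-sparse vector is $(\omega,3s)$-sparse, so $\delta_{\omega,s} \leq \delta_{\omega,3s} < 1/\sqrt{32}$ — so the single assumption $\delta_{\omega,3s} < 1/\sqrt{32}$ supplies everything the proposition requires.

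Substituting back into \eqref{IHWTrategeneric} then yields
$$
\|\bm{x} - \bm{x}^{n}\|_2 \leq 2^{-n}\|\bm{x}^s\|_2 + \left(1 + 4.34\sqrt{1+\delta_{\omega,s}}\right)\|\bm{x}-\bm{x}^s\|_2 + 4.34\sqrt{1+\delta_{\omega,s}}\,\frac{2}{\sqrt{s}}\|\bm{x}-\bm{x}^s\|_{\omega,1} + 4.34\|\bm{e}\|_2.
$$
The last step is to absorb the three coefficients under a common factor of $6$. Using $\delta_{\omega,s} < 1/\sqrt{32} \approx 0.177$ gives $\sqrt{1+\delta_{\omega,s}} < 1.09$, hence $4.34\sqrt{1+\delta_{\omega,s}} < 4.73$ and $1 + 4.34\sqrt{1+\delta_{\omega,s}} < 5.73 < 6$; since also $4.34 < 6$, each of the three error terms is dominated by $6$ times its corresponding quantity, giving \eqref{IHWTrate}. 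I do not expect a genuine obstacle here — the mathematical content lives entirely in Proposition \ref{WeightedRIPprop} and Theorem \ref{IHWTguarantee} — so the only things to watch are the monotonicity remark that lets the order-$3s$ hypothesis cover the order-$s$ application, and the constant arithmetic that collapses the bound into the clean factor of $6$.
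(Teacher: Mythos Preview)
Your proposal is correct and matches the paper's approach exactly: the paper simply states that applying Proposition~\ref{WeightedRIPprop} to $\|\bm{Ax}_{\overline{S}}\|_2$ in \eqref{IHWTrategeneric} immediately yields \eqref{IHWTrate}. Your write-up is in fact more careful than the paper's one-line proof, since you explicitly verify the monotonicity $\delta_{\omega,s}\le\delta_{\omega,3s}$ needed to invoke the proposition and carry out the constant arithmetic that collapses everything under the factor~$6$.
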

This result bears a striking resemblance to Theorem \ref{IHTguarantee} except
that it
is in terms of the weighted $\ell_1$ norm as opposed to the unweighted $\ell_1$
norm. 

\subsubsection{Performance Guarantees: Contraction}

For an arbitrary, possibly dense signal $\bm{x}$, the performance guarantees
presented above do not guarantee the convergence of IHT/IHWT, but rather they
guarantee that if the sampling matrix $\bm{A} \in \mathbb{C}^{m \times N}$
satisfies the RIP of order $3s$ then the iterates are guaranteed to converge to
a \textit{neighborhood} of the true best $s$-sparse approximation. In
\cite{IHT2}, alternative guarantees are derived under an alternative assumption
on $\bm{A}$, namely that $\|\bm{A}\|_2 < 1$. In particular, we focus on the
guarantee that if $\bm{A}$ satisfies the spectral bound $\|\bm{A}\|_2 < 1$, then
the sequence of IHT iterates $(\bm{x}_n)$ is a contractive sequence. 

Note that if $\bm{A}$ satisfies the RIP of order
$3s$, then by applying $\bm{A}$ to the canonical Euclidean basis vectors
$\{e_i\}_{i =1}^N$ it follows that the $\ell_2$ norm of the columns of $\bm{A}$
must satisfy:
\begin{equation}\label{RIPspectral}
1 - \delta_{3s} \leq \|A_j\|_2 \leq 1 + \delta_{3s}, \textup{ for } j = 1,
\dots, N. 
\end{equation}
On the other hand, the spectral norm of a linear map can equivalently be
interpreted as an operator norm: $\|\bm{A}\|_2 = \sup_{\|\bm{x}\|_2 =
1}\|\bm{Ax}\|_2$. As a consequence:
$$
\|\bm{A}\|_2 = \sup_{\|\bm{x}\|_2 = 1}\|\bm{Ax}\|_2 \geq \max_{  \bm{e}_i, i =
1, \dots, N} \|\bm{A}\bm{e}_i\|_2 = \max_{i = 1,\dots,N} \|A_i\|_2.
$$
Therefore if $\bm{A}$ satisfies the RIP condition, it could be true that
$\max_{i = 1,\dots,N} \|A_i\|_2 > 1$ by \eqref{RIPspectral}. In this manner, the
RIP condition is in general not compatible with the spectral condition
$\|\bm{A}\|_2 < 1$. 

Observe that if the spectral norm of $\bm{A}$ is bounded above by 1, then the
loss function $f(\bm{x}) = \frac{1}{2} \|\bm{y} - \bm{A} \bm{x}\|_2^2$ is
\textit{majorized} by the following surrogate objective function:
\begin{equation} \label{surrogate}
g(\bm{x}, \bm{z}) = \frac{1}{2} \|\bm{y} - \bm{A} \bm{x}\|_2^2 - \|\bm{A}(\bm{x}
- \bm{z})\|_2^2 + \|\bm{x} - \bm{z}\|_2^2. 
\end{equation}
Because $g(\bm{x}, \bm{x}) = f(\bm{x})$, optimizing $g(\bm{x}, \bm{x})$ will
decrease the objective function $f(\bm{x})$. This is known as Lange's
\textit{Majorization Minimization} (MM) Method \cite{lange}. 

Viewing $\bm{z}$ as fixed, we may decouple the coordinates $x_i$:
\begin{equation}\label{decouple}
g(\bm{x}, \bm{z}) \propto \sum_i x_i^2 - 2x_i(z_i + A^*_i \bm{y} - A^*_i
\bm{A}\bm{z}).
\end{equation}
Ignoring the sparsity constraint on $\bm{x}$, minimizing \eqref{decouple} we
obtain the unconstrained minima $\bm{x}^*$ given by:
$$
x_i^* = z_i + A_i^*\bm{y} - A_i^*\bm{A} \bm{z}.
$$
We then have that:
$$
g(\bm{x}^*, \bm{z}) \propto \sum_i {x^*_i}^2 - 2x_i^*(z_i + A_i^*\bm{y} - A_i^*
\bm{A}\bm{z}) = \sum_{i} -{x^*_i}^2.
$$
Therefore the \textit{$s$-sparse constrained minimum} of the majorizing
surrogate $g$ is given by hard thresholding $\bm{x}^*$ by choosing the largest
$s$ components in magnitude. Clearly the above analysis holds for weighted
sparse approximations as well. We therefore conclude that both the IHT and IHWT iterates share the property that the sparsity constrained minimizer of
$g(\bm{x}, \bm{x}^{n})$ is given by $\bm{x} = \bm{x}^{n+1}$. 

The following lemma establishes that IHWT makes progress at each iterate.
\begin{lemma}
Assume that $\|\bm{A}\|_2 < 1$ and let $(\bm{x}^n)$ denote the IHWT iterates
defined by \eqref{IHWT}. Then the sequences $(f(\bm{x}^n))$ and
$(g(\bm{x}^{n+1}, \bm{x}^n))$ are non-increasing. 
\end{lemma}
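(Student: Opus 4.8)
The plan is to run the textbook Majorization--Minimization descent argument that the surrogate $g$ of \eqref{surrogate} was designed to support, resting on three facts, two of which are already essentially in place above. The first is the tangency identity $g(\bm{x},\bm{x}) = f(\bm{x})$, which is immediate from \eqref{surrogate} since the two correction terms $-\|\bm{A}(\bm{x}-\bm{z})\|_2^2 + \|\bm{x}-\bm{z}\|_2^2$ vanish when $\bm{z} = \bm{x}$. The second is the genuine majorization $g(\bm{x},\bm{z}) \geq f(\bm{x})$ for all $\bm{x},\bm{z}$, and this is exactly the step where I would invoke the hypothesis $\|\bm{A}\|_2 < 1$: since $\|\bm{A}(\bm{x}-\bm{z})\|_2 \leq \|\bm{A}\|_2\,\|\bm{x}-\bm{z}\|_2 \leq \|\bm{x}-\bm{z}\|_2$, the grouped term $\|\bm{x}-\bm{z}\|_2^2 - \|\bm{A}(\bm{x}-\bm{z})\|_2^2$ is nonnegative, so $g(\bm{x},\bm{z}) = f(\bm{x}) + \left(\|\bm{x}-\bm{z}\|_2^2 - \|\bm{A}(\bm{x}-\bm{z})\|_2^2\right) \geq f(\bm{x})$. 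The third ingredient, already obtained above through the coordinatewise decoupling \eqref{decouple}, is that $\bm{x}^{n+1}$ as defined by \eqref{IHWT} is the minimizer of $g(\cdot,\bm{x}^n)$ over the feasible set $\mathcal{S}_{\omega,s}$; crucially $\bm{x}^n$ itself lies in $\mathcal{S}_{\omega,s}$ (being an output of $H_{\omega,s}$) and is therefore a competitor, so $g(\bm{x}^{n+1},\bm{x}^n) \leq g(\bm{x}^n,\bm{x}^n)$.

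For the monotonicity of $(f(\bm{x}^n))$ I would simply concatenate these three facts: majorization gives $f(\bm{x}^{n+1}) \leq g(\bm{x}^{n+1},\bm{x}^n)$; the minimizer property together with feasibility of $\bm{x}^n$ gives $g(\bm{x}^{n+1},\bm{x}^n) \leq g(\bm{x}^n,\bm{x}^n)$; and tangency gives $g(\bm{x}^n,\bm{x}^n) = f(\bm{x}^n)$. Chaining these yields $f(\bm{x}^{n+1}) \leq f(\bm{x}^n)$. For the monotonicity of $(g(\bm{x}^{n+1},\bm{x}^n))$ I would re-use the same facts in a slightly permuted order: the minimizer property at stage $n+1$ gives $g(\bm{x}^{n+2},\bm{x}^{n+1}) \leq g(\bm{x}^{n+1},\bm{x}^{n+1})$, tangency rewrites the right side as $f(\bm{x}^{n+1})$, and majorization at the point $(\bm{x}^{n+1},\bm{x}^n)$ gives $f(\bm{x}^{n+1}) \leq g(\bm{x}^{n+1},\bm{x}^n)$. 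Together these produce $g(\bm{x}^{n+2},\bm{x}^{n+1}) \leq g(\bm{x}^{n+1},\bm{x}^n)$, which is precisely the claimed non-increase.

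I do not expect a real obstacle here; the descent chain is routine once the three pillars are assembled. The one point deserving explicit care is confirming that the constrained-minimizer step transfers unchanged from the unweighted to the weighted setting. The argument is that completing the square in \eqref{decouple} shows $g(\bm{x},\bm{x}^n)$ equals $\|\bm{x}-\bm{a}^{n+1}\|_2^2$ up to an additive constant independent of $\bm{x}$, and this quantity is indexed coordinate by coordinate and wholly insensitive to the shape of the feasible set; minimizing it over $\mathcal{S}_{\omega,s}$ is therefore exactly the defining problem of $H_{\omega,s}$ in \eqref{weightedthresholdingoperator}, so the constrained minimizer is $H_{\omega,s}(\bm{a}^{n+1}) = \bm{x}^{n+1}$ regardless of the weights. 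Thus the only subtlety is bookkeeping: being explicit that feasibility of $\bm{x}^n$ is what licenses $g(\bm{x}^{n+1},\bm{x}^n) \leq g(\bm{x}^n,\bm{x}^n)$, and that tangency and majorization are applied at the correct base points.
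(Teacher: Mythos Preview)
Your proposal is correct and follows essentially the same Majorization--Minimization descent chain as the paper's proof: majorization $f(\bm{x}^{n+1}) \leq g(\bm{x}^{n+1},\bm{x}^n)$, the minimizer property $g(\bm{x}^{n+1},\bm{x}^n) \leq g(\bm{x}^n,\bm{x}^n)$, and tangency $g(\bm{x}^n,\bm{x}^n)=f(\bm{x}^n)$, then one more majorization step to handle the $g$ sequence. If anything you are more careful than the paper in making explicit the feasibility of $\bm{x}^n$ and the reason the constrained-minimizer identification carries over to the weighted setting.
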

\begin{proof}
We have the following sequence of inequalities:
\begin{align*}
f(\bm{x}^{n+1}) &\leq f(\bm{x}^{n+1})  + \|\bm{x}^{n+1} - \bm{x}^n\|_2^2 -
\|A(\bm{x}^{n+1} - \bm{x}^n)\|_2^2 
\\&= g(\bm{x}^{n+1}, \bm{x}^n)
\\ &\leq g(\bm{x}^{n}, \bm{x}^n)
\\&= f(\bm{x}^n)
\\&\leq f(\bm{x}^n) + \|\bm{x}^n - \bm{x}^{n-1}\|_2^2 - \|A(\bm{x}^n -
\bm{x}^{n-1})\|_2^2
\\&= g(\bm{x}^n, \bm{x}^{n-1}). 
\end{align*}
\end{proof}

Next we present the following lemma which states that the IHWT iterates contract.
\begin{lemma}\label{IHWTsummation}
If the sensing matrix $\bm{A}$ satisfies $\|A\|_2^2 \leq 1 - c < 1$ for some
positive $c \in (0,1)$, then for the IHWT iterates $(\bm{x}^n)$ the following
limit holds: $\lim_{n \rightarrow \infty} \|\bm{x}^{n+1} - \bm{x}^n\|_2^2 = 0$. 
\end{lemma}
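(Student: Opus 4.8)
The plan is to extract from the majorization machinery a per-iteration decrease of the objective $f$ that is quantitatively controlled by the squared step length $\|\bm{x}^{n+1} - \bm{x}^n\|_2^2$, and then to telescope. First I would return to the chain of inequalities in the preceding lemma and isolate the single relation
$$
g(\bm{x}^{n+1}, \bm{x}^n) \leq g(\bm{x}^n, \bm{x}^n) = f(\bm{x}^n),
$$
which holds because $\bm{x}^{n+1}$ minimizes the decoupled surrogate $g(\cdot, \bm{x}^n)$ over $\mathcal{S}_{\omega,s}$. Expanding the left-hand side via the definition \eqref{surrogate} of $g$ then gives
$$
f(\bm{x}^{n+1}) + \|\bm{x}^{n+1} - \bm{x}^n\|_2^2 - \|\bm{A}(\bm{x}^{n+1} - \bm{x}^n)\|_2^2 \leq f(\bm{x}^n).
$$

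Next I would invoke the spectral hypothesis. Since $\|\bm{A}\|_2^2 \leq 1 - c$, we have the bound $\|\bm{A}(\bm{x}^{n+1} - \bm{x}^n)\|_2^2 \leq (1-c)\|\bm{x}^{n+1} - \bm{x}^n\|_2^2$, so the two middle terms combine into a strictly positive multiple of the squared step length:
$$
\|\bm{x}^{n+1} - \bm{x}^n\|_2^2 - \|\bm{A}(\bm{x}^{n+1} - \bm{x}^n)\|_2^2 \geq c\,\|\bm{x}^{n+1} - \bm{x}^n\|_2^2.
$$
Substituting this into the previous display and rearranging yields the key per-step estimate
$$
c\,\|\bm{x}^{n+1} - \bm{x}^n\|_2^2 \leq f(\bm{x}^n) - f(\bm{x}^{n+1}).
$$

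Finally I would sum this inequality over $n = 0, \dots, M$. The right-hand side telescopes to $f(\bm{x}^0) - f(\bm{x}^{M+1})$, and since $f(\bm{x}) = \tfrac{1}{2}\|\bm{y} - \bm{Ax}\|_2^2 \geq 0$ we may drop the final term to obtain the uniform bound
$$
c \sum_{n=0}^{M} \|\bm{x}^{n+1} - \bm{x}^n\|_2^2 \leq f(\bm{x}^0).
$$
Letting $M \to \infty$, the partial sums stay bounded while the summands are nonnegative, so the series $\sum_{n} \|\bm{x}^{n+1} - \bm{x}^n\|_2^2$ converges; the terms of a convergent series vanish, which gives $\lim_{n\to\infty} \|\bm{x}^{n+1} - \bm{x}^n\|_2^2 = 0$.

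I do not expect a genuine obstacle here, as the argument is a standard \emph{sufficient-decrease-implies-summable-steps} estimate. The one point that deserves emphasis is that the conclusion rests essentially on the strict gap $c > 0$: this is exactly what upgrades the bare monotonicity of $(f(\bm{x}^n))$ from the preceding lemma (which by itself does not force consecutive iterates to coalesce) into the quantitative, summable bound above. The other ingredient is the nonnegativity of $f$, which supplies the lower bound needed to cap the telescoping sum, and this is immediate from $f$ being half a squared norm.
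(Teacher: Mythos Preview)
Your proof is correct and follows essentially the same route as the paper's: both use the surrogate inequality $g(\bm{x}^{n+1},\bm{x}^n)\leq f(\bm{x}^n)$ together with the spectral gap $\|\bm{A}\|_2^2\leq 1-c$ to obtain $c\,\|\bm{x}^{n+1}-\bm{x}^n\|_2^2 \leq f(\bm{x}^n)-f(\bm{x}^{n+1})$, then telescope and invoke $f\geq 0$. Your presentation isolates the per-step sufficient-decrease estimate before summing, which is marginally cleaner, but the argument is the same.
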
 
\begin{proof}
By the spectral bound:
$$
\|\bm{A}(\bm{x}^{n+1} - \bm{x}^n)\|_2^2 \leq (1-c) \|\bm{x}^{n+1} -
\bm{x}^n\|_2^2. 
$$
Rearranging terms
$$
\|\bm{x}^{n+1} - \bm{x}^n\|_2^2 \leq \frac{1}{c} \left[  \|\bm{x}^{n+1} -
\bm{x}^n\|_2^2 -   \|A(\bm{x}^{n+1} - \bm{x}^n)\|_2^2 \right].
$$
We define the sequence of partial sums $(s_n)$ by $s_n = \sum_{i = 0}^n
\|\bm{x}^{i+1} - \bm{x}^i\|_2^2$. Clearly the sequence $(s_n)$ is
monotonically increasing. If we can show that the sequence $(s_n)$ is also
bounded, then $(s_n)$ is a convergent sequence. Let $k$ be any arbitrary
index. We then obtain the following sequence of inequalities:

\begin{align}
s_k = \sum_{i = 0}^k \|\bm{x}^{i+1} - \bm{x}^i\|_2^2 &\leq \frac{1}{c} \sum_{i =
0}^k \left(    \|\bm{x}^{i+1} - \bm{x}^i\|_2^2 - \|A(\bm{x}^{i+1} -
\bm{x}^i)\|_2^2     \right)
\\& =\frac{1}{c} \sum_{i = 0}^k g(\bm{x}^{i+1}, \bm{x}^i) - \frac{1}{2} \|\bm{y}
- \bm{A}\bm{x}^{i+1}\|_2^2
\\&\leq \label{surrogateoptimality} \frac{1}{c} \sum_{i = 0}^k g(\bm{x}^{i},
\bm{x}^i) -f(\bm{x}^{i+1}) 
\\& =\frac{1}{c} \sum_{i = 0}^k f(\bm{x}^i)  -f(\bm{x}^{i+1})  
\\&=\frac{1}{c}( f(\bm{x}^0) - f(\bm{x}^{k+1}))
\\& \leq \frac{1}{c} f(\bm{x}^0)
\end{align}
where \eqref{surrogateoptimality} follows from the next IHWT iterate
$\bm{x}^{i+1}$ being a minimizer of $g(\bm{x}, \bm{x}^i)$. 

Therefore $\{s_n\}$ is a convergent sequence. As the sequence of partial sums
converges, the infinite sum $\sum_{ i = 0}^\infty \|\bm{x}^{i+1} -
\bm{x}^i\|_2^2 < \infty$ and as a result $\lim_{n \rightarrow \infty}
\|\bm{x}^{n+1} - \bm{x}^n\|_2^2 = 0$. 
\end{proof}

\section{Iterative Hard Weighted Thresholding: In Practice}

\subsection{Choosing the weights} Before delving into numerical
experiments, we pause for a moment and focus on the overall setup of performing
signal analysis in practice. Note in this article, we have ignored the
preprocessing required to properly select the weight parameter $\omega$. In
reality, this may require either significant domain knowledge (hand crafted) or
the application of a learning algorithm to rank the atoms and assign weights
(learned). If $N \gg 1$, it is not feasible to expect a human expert to assign
weights
to each of these atoms and instead we may assign weights to blocks of atoms.
While this may be effective, the overall structure of the signals may not be
fully captured in such a model. It is an interesting avenue of research to
explore whether or not there are some machine learning algorithms which could
effectively learn the weights of a class of signals given some training data.
One could assume that the weights $\omega$ are generated from some unknown
smooth function $f$, i.e. $\omega(i) = f(i)$ for $i = 1, \dots, N$ and apply
some nonparametric statistical methods. One could test the quality of the
weights by testing to see if weighted $\ell_1$ minimization with those learned
weights can effectively recover test signals. 

Another related problem is to assume that the signals $\bm{x}$ are being generated from some
parameterized probability distribution $p(\bm{x})$.\footnote{It should be noted
that to optimize the parameters, one typically performs some sort of learning
method on training data to optimize the parameters. One common method is to have
some training data and use the Expectation Maximization (EM) algorithm to tune
the parameters.} While it may make intuitive sense why a weight parameter
$\omega$ which is
monotonically increasing is appropriate for a family of power law decay signals,
the manner in which these $\omega_j$ components should grow
is far from obvious. One may pose the following question: given a signal
pdf $p(\bm{x})$, is there an optimal weight parameter $\omega$? Here, optimal
means that with high probability, sparse signals generated from the pdf
$p(\bm{x})$ are recovered from weighted $\ell_1$ minimization with weighted
parameter $\omega$. If so, how does
one compute it? The works \cite{pdf1,pdf2,pdf3} consider this problem and derive
performance guarantees of weighted $\ell_1$ minimization for the optimal weight
parameter $\omega$. In \cite{pdf1}, exact weights were computed for their
simpler signal model in which there are two blocks of support and weights
$\omega_1$ and $\omega_2$ need to be chosen for each block. In \cite{pdf3} a
more general
signal model is employed and the authors suggest methods for choosing the
weights based 
on $p(\bm{x})$. Aside from some relatively simple cases, it is not explicitly
known how to compute an optimal set of weights given a model signal pdf
$p(\bm{x})$.

\subsection{Approximate Projection} The main consequence of the intractability
of
computing weighted best $s$-sparse approximations is that we cannot run the IHWT
algorithm as each iterate requires a projection
onto $\mathcal{S}_{\omega, s}$.

To reconcile this issue we compute an approximation to $H_{\omega,
s}(\bm{x})$. Let $\widetilde{H}_{\omega, s}(\bm{x})$
denote a modified projection operator which sorts the weighted signal
$\omega^{-1} \circ\bm{x}$ \footnote{$A \circ B$ denotes the Hadamard product of
$A$ and $B$.} and thresholds it with respect to the weight $\omega$. Consult
\cite{RW} for properties of this weighted thresholding operator.

In what sense is $\widetilde{H}_{\omega, s}$ an approximation to $H_{\omega,
s}$? We present the following example to build some intuition. Let $N = 100$ and
let $\omega$ be given by $\omega(j) = \sqrt{j}$ for $j = 1, \dots, 100$.
Consider the signal $\bm{x}$ where $x(1) = 10$ and $x(100) = 99$ and equal to 0
otherwise. Then $\omega^{-1} \circ \bm{x} = [10, 0, \cdots, 0, 9.9]$. Sorting
and
thresholding we obtain that $ \widetilde{\bm{x}} = \widetilde{H}_{\omega,
100}(\bm{x}) = [10, 0, \cdots, 0]$. Clearly the best weighted 100 sparse
approximation is given by $\bm{x}^* = [0, \cdots, 0, 99]$. In this case, our
projection operator $\widetilde{H}_{\omega,s}$ did not compute a very good
approximation. However, one can claim that the signal $\bm{x}$ is a
\textit{mis-match} for our weight parameter $\omega$. For signals which ``match"
the weights more closely, $\widetilde{H}_{\omega, s}$ does a better job of
recovering the output of the true projection $H_{\omega, s}$. For example, if
$\bm{x}$ was chosen to be a 
monotonically decreasing signal, this would match the weight $\omega$ and in
this case our surrogate $\widetilde{H}_{\omega, s}$ will compute accurate
projections. 

\subsection{Experiments}

For the remainder of this section, we will be interested in either the approximation or exact recovery of power law distributed signals. To randomly generate power law
signals, we randomly choose integers $a,b$ and formed the power function $f(x) =
\frac{a}{x^b}$ and defined our signal $\bm{x}$ by $\bm{x}(i) = f(i)$ for $i = 1,
\dots, N$. 

We chose our weight parameter $\omega$ as follows: the first $s$-block of coordinates we are relatively certain should be included in our support set as we are dealing with power law signals, thus we set $\omega(1:s) = 1$. For the second $s$-block of coordinates we are more uncertain about their inclusion in the signals support set and thus we set $\omega(s+1:2s) = 3$ and we set the tail $\omega(2s+1:N) = 10$ for similar reasons. Note that these are still relatively mild weights given the power law prior we have assumed. We further note that given these weights the best $(\omega,s)$-sparse approximation is going to be the actual best $s$-sparse approximation for $s$-sparse power law signals.

In the following set of experiments we will test the performance of IHWT for
computing $(\omega,s)$-sparse approximations of dense power law decaying
signals. For arbitrary dense signals $\bm{x}$, it requires a combinatorial
search to compute the best $(\omega,s)$-sparse approximation. However, for power
law decay signals, the best $(\omega,s)$-sparse approximation is simple to
compute as it can be performed by choosing the minimal $k$ index such that $\sum_{i =
1}^k i \leq s$. We note that while the approximate projection operator $\widetilde{H}_{\omega, s}$ will indeed compute the true $(\omega,s)$-sparse approximation of a power law distributed signal, our gradient descent updates $\bm{x}^{n+1} + \bm{A}^*(\bm{y} - \bm{A} \bm{x}^n)$ are a priori not going to be power law distributed signals. Therefore in these experiments, we are not only testing the performance of IHWT, but also of this surrogate projection operator $\widetilde{H}_{\omega, s}$. The noisy measurements were $\bm{y} = \bm{Ax} + \bm{e}$ where
$\bm{e}$ is a Gaussian noise vector. To test the quality of our noisy sparse
approximation, we computed the 
normalized error $\frac{ \|\bm{x}^s-\bm{x}_{\mathrm{approx}}\|_2  }{\|\bm{e}\|_2}$, where
$\bm{x}^s$ is the true best $s$-sparse approximation and $\bm{x}_{\mathrm{approx}}$ is the approximation output by our algorithm.

 \begin{figure}
\centering
 \includegraphics[scale = .8]{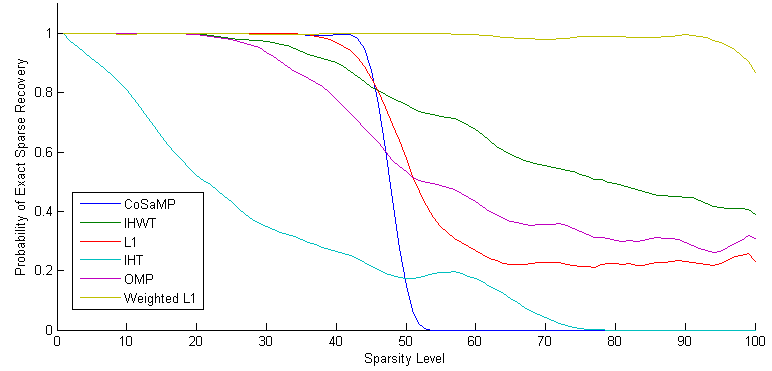}
 \caption{Exact Recovery of Randomly Generated Variable $s$-sparse Power Law Signals using $m  = 128$ measurements. Results are averaged over 200 trials. Best viewed in color. }
 \label{fig:ExactSparseRecoveryVariableMeasurements}
\end{figure}

In Figure \ref{fig:ExactSparseRecoveryVariableMeasurements} we present the performance of IHWT, CoSaMP \cite{cosamp}, IHT \cite{IHT}, OMP \cite{OMP}, $\ell_1$ minimization and weighted $\ell_1$ minimization for the task of exact sparse recovery using $m = 128$ measurements. In particular we randomly generated $\bm{A} \in \mathbb{R}^{128 \times 256}$ Gaussian sensing matrices, $s$-sparse power law signals $\bm{x}_s$ and we have the noise-free measurements $\bm{y} = \bm{A} \bm{x}_s$. We consider a signal to be exactly recovered if the signal approximation and the true underlying signal agree to four decimal places, i.e. $\|\mathrm{x_{approx}} - x_s\|_2 \leq 10^{-4}$. We averaged over 200 trials. Observe that while IHWT does not exactly recover the sparser power law signals as well as CoSaMP or $\ell_1$ minimization, its recovery performance degrades much more gracefully as the sparsity level increases.

 \begin{figure}
\centering
 \includegraphics[scale = 1]{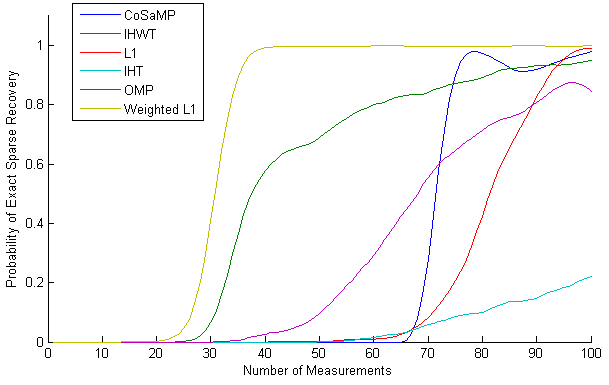}
 \caption{Exact Recovery of a fixed sparse $s = 25$ power law distributed signal using a variable number of measurements. Results are averaged over 200 trials. Best viewed in color. }
 \label{fig:ExactFixedSparsity}
\end{figure}

In Figure \ref{fig:ExactFixedSparsity} we now keep the sparsity level fixed at $s = 25$ and we allow the number of measurements $m$ to vary from 1 to 100. We averaged over 200 runs and we present the probability of exact recovery. Observe the superior performance of IHWT over the other classical greedy sparse approximation algorithms in the undersampling $ m = O(s)$ regime.

In the next set of experiments, we tested the noisy sparse recovery performance of IHWT and we compared it again three standard sparse approximation algorithms: CoSaMP, IHT and OMP. We have a fixed number of measurements $m = 128$ and we randomly generated $\bm{A} \in \mathbb{R}^{128 \times 256}$ Gaussian sensing matrices and we have noisy samples $\bm{y} = \bm{A} \bm{x} + \bm{e}$. Note now that $\bm{x}$ is no longer an $s$-sparse power law distributed signal but rather it is a \textit{dense} power law distributed signal. In Figure \ref{fig:NoisyFixedMeasurementsMean} we present the $\log$ normalized error and log of the standard deviation averaged over 200 trials and in \ref{fig:NoisyFixedMeasurementsStd} we present the $\log$ of the standard deviation of the 200 trials. In Figures \ref{fig:NoisyFixedMeasurementsMean} and \ref{fig:NoisyFixedMeasurementsStd} we see the clear performance advantage of IHWT over other greedy algorithms for the task of fixed sparse approximation of power law distributed signals using a fixed number of measurements.

 \begin{figure}
\centering
 \includegraphics[scale = 1]{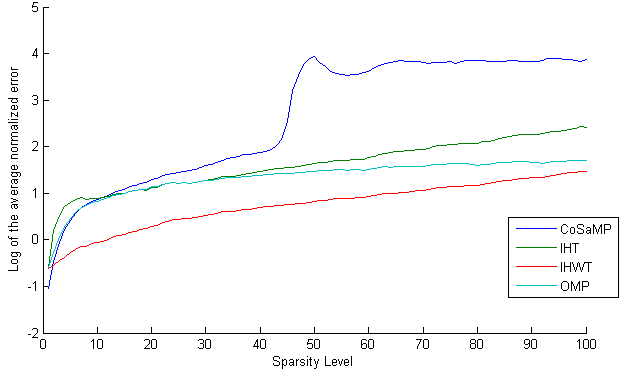}
 \caption{The $\log$ normalized error averaged over 200 trials of noisy $s$-sparse approximation of dense Power Law Signals using $m  = 128$ measurements. Best viewed in color. }
 \label{fig:NoisyFixedMeasurementsMean}
\end{figure}

 \begin{figure}
\centering
 \includegraphics[scale = 1]{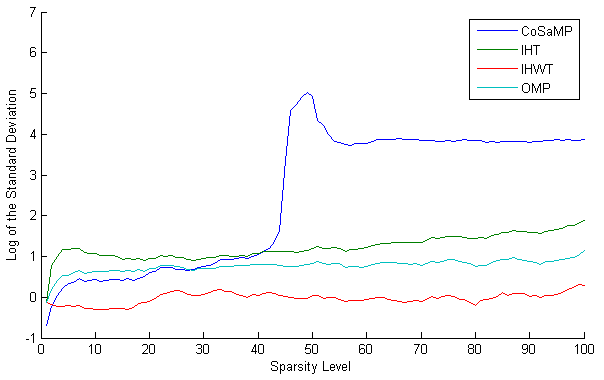}
 \caption{The $\log$ standard deviation of 200 trials of noisy $s$-sparse approximation of dense Power Law Signals using $m  = 128$ measurements. Best viewed in color. }
 \label{fig:NoisyFixedMeasurementsStd}
\end{figure}

In our final set of experiments, we tested how well we could approximate the best $s  = 25$ sparse approximation of a dense power law signal $\bm{x}$ given a set of noisy measurements $\bm{y} = \bm{A} \bm{x} + \bm{e}$ using a variable number of measurements $m = 1, \dots, 100$. In Figures \ref{fig:NoisyFixedSparseMean} and \ref{fig:NoisyFixedSparseStd} we again see the improved performance of IHWT over other standard greedy sparse approximation algorithms. 

 \begin{figure}
\centering
 \includegraphics[scale = 1]{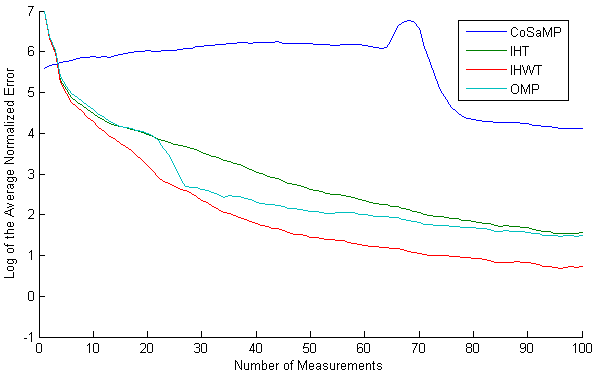}
 \caption{The $\log$ normalized error averaged over 200 trials of noisy $s$-sparse approximation of dense Power Law Signals using a variable number of measurements. Best viewed in color. }
 \label{fig:NoisyFixedSparseMean}
\end{figure}

 \begin{figure}
\centering
 \includegraphics[scale = 1]{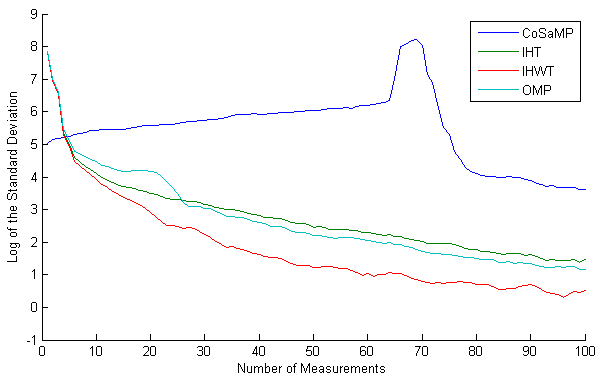}
 \caption{The $\log$ standard deviation of 200 trials of noisy $s$-sparse approximation of dense Power Law Signals using a variable number of measurements. Best viewed in color. }
 \label{fig:NoisyFixedSparseStd}
\end{figure}


\section{Conclusion and Future Directions} 
We have presented the IHWT algorithm which is a weighted extension of the IHT algorithm using the weighted
sparsity technology developed in \cite{RW}. We established theoretical guarantees
of IHWT which are weighted analogues of their unweighted counterparts. While not
all of the guarantees presented in \cite{IHT, IHT2, NIHT} are able to be
extended, in certain cases like Prop \ref{WeightedRIPprop} and Theorem
\ref{IHWTextendedguarantee} we were able to extend the results using the
additional hypothesis that the weight parameter $\omega$ satisfies
$\|\omega\|_\infty \leq O(s)$ for a given weighted sparsity $s$. This condition
allowed us to control the tail $\bm{x} - \bm{x}^s$ and instead of obtaining
$\ell_p$ error bounds, we obtained the analogous error bounds in the weighted
$\ell_p$ norms. Empirically to test the performance of IHWT, we implemented a
tractable surrogate for
the projection onto the space of weighted signals. The numerical experiments also show that the normalized
version of IHWT has superior performance to their
unnormalized counterparts. 

We pose the following open problems:
\begin{enumerate}[1.]
\item Can the results from \cite{IHT2}, which guarantee the convergence of IHT to a local minimizer be extended to IHWT? Here we only extended the guarantee that the IHWT sequence of iterates $(\bm{x}^n)$ is a contractive sequence.
\item Can we learn the weight parameter $\omega$ given some training data $\{\bm{x}^i\}$ where each $\bm{x}^i$ is a known signal?
\item Here we simply used the weights from the weighted $\ell_1$ minimization
problem to be our sparsity weights. However, is there a more optimal choice of
weights to reduce the performance gap between IHWT and weighted $\ell_1$
minimization for the task of exact sparse recovery? 
\end{enumerate}

\section*{Acknowledgements}
The author gratefully acknowledges support from Rachel Ward's Air Force Young
Investigator Award and overall guidance from Rachel Ward throughout the entire
process.

\bibliography{IHWT}

\end{document}